\theoremstyle{plain}
\newtheorem{thm}{Theorem}
\newtheorem{lem}[thm]{Lemma}
\newtheorem{cor}[thm]{Corollary}
\theoremstyle{definition}
\newtheorem{conj}{Conjecture}
\newtheorem*{rem}{Remark}
\newcommand{\ket}[1]{|#1\rangle}
\newcommand{\bra}[1]{\langle #1|}
\newcommand{\bracket}[2]{\langle #1|#2\rangle}
\newcommand{\ketbra}[2]{|#1\rangle\langle #2|}
\newcommand{\ee}{\mathcal{E}}
\newcommand{\eq}[1]{(\hyperref[eq:#1]{\ref*{eq:#1}})}
\renewcommand{\sec}[1]{\hyperref[sec:#1]{Section~\ref*{sec:#1}}}
\newcommand{\thrm}[1]{\hyperref[thm:#1]{Theorem~\ref*{thm:#1}}}
\newcommand{\lemm}[1]{\hyperref[lemm:#1]{Lemma~\ref*{lemm:#1}}}
\newcommand{\prop}[1]{\hyperref[prop:#1]{Proposition~\ref*{prop:#1}}}
\newcommand{\corr}[1]{\hyperref[corr:#1]{Corollary~\ref*{corr:#1}}}
\newcommand{\fig}[1]{\hyperref[fig:#1]{\ref*{fig:#1}}}
\newcommand{\tbl}[1]{\hyperref[fig:#1]{\ref*{tbl:#1}}}
\newcommand{\ba}{\begin{eqnarray}}
\newcommand{\ea}{\end{eqnarray}}
\newcommand{\norm}[1]{\left\lVert#1\right\rVert}
\newtheorem*{rep@theorem}{\rep@title}
\newcommand{\newreptheorem}[2]{%
\newenvironment{rep#1}[1]{%
 \def\rep@title{#2 \ref{##1}}%
 \begin{rep@theorem}}%
 {\end{rep@theorem}}}
\newcommand{\change}[1]{{#1}}
\begin{document}

\preprint{MIT-CTP/4902}

\title{Diagonal quantum discord}% Force line breaks with \\
%\thanks{A footnote to the article title}%

\author{Zi-Wen Liu}
\email{zwliu@mit.edu}
\affiliation{Center for Theoretical Physics, Massachusetts Institute of Technology, Cambridge, Massachusetts 02139, USA}
\affiliation{Research Laboratory of Electronics, Massachusetts Institute of Technology, Cambridge, Massachusetts 02139, USA}
\affiliation{Department of Physics, Massachusetts Institute of Technology, Cambridge, Massachusetts 02139, USA}
\affiliation{Perimeter Institute for Theoretical Physics, Waterloo, Ontario N2L 2Y5, Canada}
\author{Ryuji Takagi}
\email{rtakagi@mit.edu}
\affiliation{Center for Theoretical Physics, Massachusetts Institute of Technology, Cambridge, Massachusetts 02139, USA}
\affiliation{Department of Physics, Massachusetts Institute of Technology, Cambridge, Massachusetts 02139, USA}
\author{Seth Lloyd}
\email{slloyd@mit.edu}
\affiliation{Research Laboratory of Electronics, Massachusetts Institute of Technology, Cambridge, Massachusetts 02139, USA}
\affiliation{Department of Physics, Massachusetts Institute of Technology, Cambridge, Massachusetts 02139, USA}
\affiliation{Department of Mechanical Engineering, Massachusetts Institute of Technology, Cambridge, Massachusetts 02139, USA}

%\collaboration{MUSO Collaboration}%\noaffiliation

\date{\today}% It is always \today, today,
             %  but any date may be explicitly specified

\begin{abstract}

Quantum discord measures quantum correlation by comparing the quantum mutual
information with the maximal amount of mutual information accessible to
a quantum measurement.    This paper analyzes the properties of diagonal
discord, a simplified version of discord that compares quantum mutual information with the mutual
information revealed by a measurement that correspond to the
eigenstates of the local density matrices.   In contrast to the optimized
discord, diagonal discord is easily computable; it also finds connections to thermodynamics and resource theory.
Here we further show that, for the generic case of non-degenerate local
density matrices, diagonal discord exhibits desirable properties as a preferable discord measure. 
 We employ the
theory of resource destroying maps [Liu/Hu/Lloyd, PRL 118, 060502 (2017)] to prove that diagonal discord is
monotonically nonincreasing under the operation of local 
discord nongenerating qudit channels, $d>2$, and provide
numerical evidence that such monotonicity holds for qubit channels as well.
We also show that it is continuous, and derive
a Fannes-like continuity bound.  
Our results hold for a variety of simple discord measures generalized from
diagonal discord.

\end{abstract}

\pacs{}% PACS, the Physics and Astronomy
                             % Classification Scheme.
%\keywords{Suggested keywords}%Use showkeys class option if keyword
                              %display desired
\maketitle

%\tableofcontents

\section{Introduction}
Quantum discord measures a very general form of non-classical correlation,
which can be present in quantum systems even in the absence of entanglement. Since the first expositions of this concept more than a decade ago \cite{zurek,hv}, a substantial amount of research effort has been devoted to understanding the mathematical properties and physical meanings of discord and similar quantities. 
Comprehensive surveys of the properties of discord can be found in \cite{RevModPhys.84.1655,adesso}, 
and references \cite{doi:10.1142/S123016121440006X,Brodutch2017,2016arXiv161101959A,Vedral2017} provide recent perspectives on the field.

The study of discord presents many challenges and open questions.  
One major difficulty with discord-like quantities is that they are
hard to compute or analyze.
The canonical version of discord is defined to be the difference between quantum mutual information (total correlation) and the maximum amount of correlation that is locally accessible (classical correlation), which involves optimization over all possible local measurements. Such optimization renders the problem of studying discord and its variants (such as quantum deficit \cite{Horodecki2005local}, geometric discord \cite{Dakic2010}) very difficult.  In general, computing these optimized quantities is NP-complete \cite{npc}, and the analytic formulas are only known for very limited cases \cite{Fanchini2010,PhysRevA.84.042313,PhysRevA.86.032110}.

Diagonal discord is a natural simplification of discord, in which one looks at the mutual information revealed by a measurement in the optimal eigenbasis (unique in the absence of degeneracy) of the reduced density matrix of the subsystem under study \cite{heat}. That is, we allow the local density
matrix to `choose' to define mutual information by the locally
minimally disturbing measurement. Because such measurement does not disturb the local states, diagonal discord truly represents the property of `correlation'. 
Note that diagonal discord needs to be distinguished from basis-dependent discord \cite{RevModPhys.84.1655}, which is given by a prefixed local measurement and hence can be studied with tools from coherence theory \cite{PhysRevX.6.041028}.
By definition, diagonal discord is an upper bound for discord as originally defined, \change{and is a faithful discord measure, meaning that it takes zero only for states with zero discord, or equivalently the classical-quantum states}.
Different entropic measures of discord (the optimized discord and deficit \cite{RevModPhys.84.1655,zurekmd}) coincide with diagonal discord when the optimization
procedure leads to measurements with respect to a local eigenbasis.
We note that quantities defined by a similar local measurement strategy have been considered before: the so-called measurement-induced disturbance \cite{mid} and nonlocality \cite{min} are close variants of diagonal discord defined by local eigenbases as well,  but crucially they are not faithful one-way discord measures, as opposed to diagonal discord.
Diagonal discord has been shown to play key roles in thermodynamic scenarios, such as energy transport \cite{heat}, work extraction \cite{dddemon}, temperature estimation \cite{sone2018quantifying}, \change{and local parameter estimation \cite{sone2018nonclassical}}.
In contrast to optimized discord-type quantities, diagonal
discord is in general efficiently computable.  
A similar case is the entanglement negativity \cite{neg}, which, as 
a computable measure of entanglement, greatly simplifies the study of
entanglement in a wide variety of scenarios. Furthermore, diagonal discord naturally emerges from the theory of resource destroying maps \cite{rdm}, a recent general framework for analyzing quantum resource theories.
We believe that the study of diagonal discord
may forge new links between discord and resource theory.

Because diagonal discord is defined
without optimization over local measurements, several 
of its important mathematical properties must be verified.    
First, monotonicity (nonincreasing property) under operations that are considered free is
a defining feature of resource measures; identifying such monotones
is a central theme of resource theory. A curious property of discord is that it can even be created by some local operations \cite{local,Hu2012a}.  
It is unclear whether some other local operation can increase diagonal discord. Note that the monotonicity under all nongenerating operations is arguably an overly strong requirement \cite{adesso}, which automatically implies monotonicity for all theories with less free operations.
Second, continuity is also a desirable feature \cite{criteria,adesso}, which indicates that the measure does not see a sudden jump under small perturbations. 
From examples given in \cite{criteria,wpm}, where the local states are both maximally mixed qubits, we know that diagonal discord can generally be discontinuous at degeneracies. However, the continuity properties otherwise remains unexplored.  These two unclear features represent the most important concerns of restricting to local eigenbases. 

The purpose of this paper is to address the above concerns. We first find that, rather surprisingly, diagonal discord exhibits good monotonicity properties under local discord nongenerating operations. 
The discord cannot be generated under local operation if and only if the local operation is commutativity-preserving \cite{Hu2012a}. 
We show that local isotropic channels, a subset of commutativity-preserving maps, commute with the canonical discord destroying map, which implies that diagonal discord is monotone under them by \cite{rdm}. By the classification of commutativity-preserving operations \cite{local,Hu2012a,Guo2013}, we conclude that monotonicity holds for all local commutativity-preserving operations except for unital qubit channels that are not isotropic. However, numerical studies imply that monotonicity holds for these channels as well. Then, we prove that, when the local density operator is nondegenerate, diagonal discord is continuous. We derive a Fannes-type continuity bound, which diverges as the minimum gap between eigenvalues tends to zero as expected. 
\change{At last, we explicitly compare diagonal discord with the optimized discord for a large number of randomly sampled symmetric two-qubit $X$-states, which are expected to reveal the generic behaviors of bipartite states, to better understand the simplification of measurement strategy.  We find that, for a significant proportion of states, the optimal measurement that induces discord is given by a local eigenbasis, or equivalently, diagonal discord matches the optimized discord. 
In other cases, the value of diagonal discord could be significantly greater than discord. However, it should be emphasized that diagonal discord should just be seen as a different way of measuring the same type of resource, which correspond to different operational and physical meanings.  In this sense, it is not very meaningful to directly compare the values of diagonal discord and optimized discord.}

\section{Diagonal discord}
Here, we define the notion of diagonal discord more formally. Without loss of generality, we mainly study the one-sided discord of a bipartite state $\rho_{AB}$, where the local measurements are made on subsystem $A$. As will be shown later, it is straightforward to generalize the results to two-sided measurements or multipartite cases.

Let $\{\Pi^A_i\equiv|i\rangle_A\langle i|\}$ be a local eigenbasis of $A$, i.e., suppose $\rho_A={\rm tr}_{B}\rho_{AB}$ admits spectral decomposition $\rho_A = \sum_i p_i \Pi_i^A$. 
Note that the eigenbasis is not uniquely determined in the presence of degeneracy in the spectrum.
Define $\pi_A(\rho_{AB})=\sum_i(\Pi^A_i\otimes I_B)\rho_{AB}(\Pi^A_i\otimes I_B)=\sum_i \Pi_i^A\otimes \langle i|\rho_{AB}|i\rangle$, \change{where $\langle i|\rho_{AB}|i\rangle := {\rm tr}_A([\Pi_i^A\otimes I_B]\rho_{AB})$}. 
This describes the local measurement in some eigenbasis $\{\Pi_i^A\}$.

Diagonal discord of $\rho_{AB}$ as measured by $A$, denoted as $\bar{D}_A(\rho_{AB})$, quantifies the reduction in mutual information induced by $\pi_A$. Since $\pi_A$ does not perturb $\rho_A$, $\bar{D}_A$ equals the increase in the global entropy. So diagonal discord represents a unified simplification of discord and deficit.
Formally, 
\begin{eqnarray}
\bar{D}_A(\rho_{AB})&:=&I(\rho_{AB})-\change{\max_{\pi_A}} I(\pi_A(\rho_{AB}))\label{discord}\\
&=&\change{\min_{\pi_A}} S(\pi_A(\rho_{AB}))-S(\rho_{AB}).\label{ssdd}\label{deficit}
\end{eqnarray}
where the optimization is taken over the local eigenbases spanning the possibly degenerate subspace.
\change{(To avoid issues concerning the existence of optimal measurements, we use minimum instead of infimum, following e.g.~\cite{hv,2010arXiv1004.2082M,PhysRevA.88.014302}.)}
If the optimization is instead taken over all local measurements, the first line Eq.~(\ref{discord}) would reduce to discord \change{(note that the two original definitions of discord differ slightly in the local measurements allowed: Ollivier and Zurek \cite{zurek} used von Neumann measurements, while Henderson and Vedral \cite{hv} used POVMs)}, and the second line Eq.~(\ref{deficit}) would reduce to deficit, which are inequivalent in general.

It is crucial that diagonal discord can also take the form of relative entropy.  First notice that
\begin{eqnarray}\label{pitr}
S(\pi_A(\rho_{AB}))&=&-{\rm tr}(\pi_A(\rho_{AB})\log \pi_A(\rho_{AB}))
=-{\rm tr}\left(\sum_i (\Pi^A_i\otimes I_B)\rho_{AB}(\Pi^A_i\otimes I_B)\log \pi_A(\rho_{AB})\right)\\
&=&-{\rm tr}\left(\sum_i (\Pi^A_i\otimes I_B)\rho_{AB}\log \pi_A(\rho_{AB})(\Pi^A_i\otimes I_B)\right)\\
&=&-{\rm tr}\left(\sum_i (\Pi^A_i\otimes I_B)^2\rho_{AB}\log \pi_A(\rho_{AB})\right)\\
&=&-{\rm tr}\left(\sum_i (\Pi^A_i\otimes I_B)\rho_{AB}\log \pi_A(\rho_{AB})\right)\\
&=&-{\rm tr}\left(\rho_{AB}\log \pi_A(\rho_{AB})\right),
\end{eqnarray}
where the second line follows from the fact that each $(\Pi^A_i\otimes I_B)$ commutes with $\pi_A(\rho_{AB})$, the third line follows from the cyclic property of trace, the fourth line follows from the idempotence of $\Pi^A_i\otimes I_B$, and the fifth line follows from the completeness relation $\sum_i(\Pi^A_i\otimes I_B)=I$. Therefore, by Eq.~(\ref{deficit}),
\begin{equation}
    \bar{D}_A(\rho_{AB}) = \change{\min_{\pi_A}} S(\pi_A(\rho_{AB})) - S(\rho_{AB}) =\change{\min_{\pi_A}} {\rm tr}\{\rho_{AB}\left(\log\rho_{AB}-\log{\pi_A(\rho_{AB})}\right)\} = \change{\min_{\pi_A}}S(\rho_{AB}\|\pi_A(\rho_{AB})).  \label{reldd}
\end{equation}
That is, diagonal discord of $\rho_{AB}$ equals the relative entropy to $\pi_A(\rho_{AB})$, minimized over eigenbases $\pi_A$ in the presence of degeneracies.
From the above relation, it can be seen that diagonal discord indeed obeys the faithfulness condition, that is, it only vanishes for classical-quantum states \change{(as the optimized discord \cite{zurek,datta,fer})}, the fixed points of $\pi_A$.
Note that, in general, optimization is still needed within degenerate subspaces.   
It can be seen that, as long as the degenerate subspace is small, diagonal discord can be efficiently computed. In this paper, we are mostly concerned with the nondegenerate case, where $\pi_A$ is unique.
\change{Note that the (one-sided version of) measurement-induced disturbance \cite{mid} and measurement-induced nonlocality \cite{min,xi2012} are not faithful due to the absence of the above minimization within the degenerate subspace \cite{Girolami2011}, which is a crucial difference from the diagonal discord.}

\section{Structure of $\pi$ theory and monotonicity}

In this section, we investigate the monotonicity property of diagonal discord.  The main idea is to employ the monotonicity theorem of commuting free operations, which comes from the theory of resource destroying maps \cite{rdm}.  To do so, we need to analyze whether the discord-free operations commutes with certain resource destroying map for discord.  The cases of qubits and higher dimensions turn out to be quite different and are discussed separately.

\subsection{Resource destroying maps and the monotonicity theorem}

Before going into details, we first briefly review the theory of resource destroying maps \cite{rdm}.
A map $\lambda$ is called resource destroying map if it maps all non-free states to free ones, and does nothing on free states.
It allows to classify quantum channels (completely-positive trace preserving maps) into some classes depending on the condition that the channel satisfies. 
Let $\mathcal{E}$ be a channel. The nongenerating condition $\lambda\circ\mathcal{E}\circ\lambda = \mathcal{E}\circ\lambda$ gives the maximal nontrivial set of free operations; a little thought will convince one that it captures the resource nongenerating property of $\ee$. 
It turns out to be useful to consider a stronger condition $\lambda\circ\mathcal{E} = \mathcal{E}\circ\lambda$, which we call the commuting condition. 
Let $\bar{X}(\lambda)$ and $X(\lambda)$ denote the sets of operations satisfying the nongenerating condition and commuting condition respectively. 
Note that any channel in \change{$X(\lambda)$} is also in \change{$\bar{X}(\lambda)$} but the converse is not true in general.
This classification is very useful when the monotonicity of resource measures is concerned. One can easily show the following:
\begin{thm}[\cite{rdm}]
Let $\lambda$ be a resource destroying map.  Then the distance-based resource measure $\delta(\rho,\lambda(\rho))$, where $\delta$ is any distance measure satisfying the data processing inequality, is nonincreasing under $X(\lambda)$ (channels that satisfy the commuting condition).  
\label{thm:mono}
\end{thm}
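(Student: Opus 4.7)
The plan is to show the inequality $\delta(\mathcal{E}(\rho),\lambda(\mathcal{E}(\rho))) \leq \delta(\rho,\lambda(\rho))$ directly by combining the two defining properties that appear in the hypothesis: the commuting condition $\lambda\circ\mathcal{E}=\mathcal{E}\circ\lambda$ assumed of $\mathcal{E}\in X(\lambda)$, and the data processing inequality assumed of $\delta$.

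First I would rewrite the argument of the second slot of $\delta$ on the left-hand side. By the commuting condition, $\lambda(\mathcal{E}(\rho)) = \mathcal{E}(\lambda(\rho))$, so
\begin{equation}
\delta\bigl(\mathcal{E}(\rho),\lambda(\mathcal{E}(\rho))\bigr) = \delta\bigl(\mathcal{E}(\rho),\mathcal{E}(\lambda(\rho))\bigr).
\end{equation}
Next I would invoke the data processing inequality for $\delta$ with respect to the channel $\mathcal{E}$, which states that $\delta(\mathcal{E}(\sigma),\mathcal{E}(\tau))\leq \delta(\sigma,\tau)$ for any states $\sigma,\tau$. Applying this with $\sigma=\rho$ and $\tau=\lambda(\rho)$ gives
\begin{equation}
\delta\bigl(\mathcal{E}(\rho),\mathcal{E}(\lambda(\rho))\bigr) \leq \delta\bigl(\rho,\lambda(\rho)\bigr),
\end{equation}
and chaining these two steps yields the desired monotonicity.

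Notice that the property of $\lambda$ being a resource destroying map, beyond simply being a quantum channel (so that $\mathcal{E}(\lambda(\rho))$ is a legitimate argument of $\delta$ and the data processing inequality can be applied), is not actually used in this derivation; the structural content needed is entirely captured by the commuting relation. There is no real obstacle: the argument is essentially a one-line calculation once one recognizes that the commuting condition is precisely what is needed to convert a statement about $\lambda(\mathcal{E}(\rho))$ into a statement about $\mathcal{E}(\lambda(\rho))$, at which point data processing finishes the job. The only thing to be careful about is that the distance measure $\delta$ is defined on pairs of states and that the data processing inequality is understood to hold for all CPTP maps, so that it may be invoked for the specific channel $\mathcal{E}$ at hand; both are standard assumptions built into the statement.
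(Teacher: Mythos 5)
Your proof is correct and is exactly the intended argument: the commuting condition converts $\lambda(\mathcal{E}(\rho))$ into $\mathcal{E}(\lambda(\rho))$, and the data processing inequality applied to the channel $\mathcal{E}$ finishes the job (the paper defers to \cite{rdm} with the remark that ``one can easily show'' this, and this one-liner is that proof). One small correction to your closing remark: a resource destroying map $\lambda$ is generally \emph{not} a quantum channel (for discord it is necessarily nonlinear, as the paper stresses); all your argument actually needs is that $\lambda(\rho)$ is a valid state, which holds by definition of a resource destroying map.
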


\subsection{Theory of discord destroying map $\pi$ and the monotonicity of diagonal discord}

When it comes to the discord where free states are classical-quantum states, $\pi_A$ (defined in the last section) is a natural discord destroying map.   We stress that $\pi_A$ is not a quantum channel. (Since the set of classical-quantum states is nonconvex, any discord destroying map is nonlinear \cite{rdm}.) By Eq.~(\ref{reldd}), taking $\pi_A$ as the resource destroying map and the relative entropy as the distance measure gives us diagonal discord. 

\subsubsection{Classification of discord nongenerating local operations}
We consider operations acting on subsystem $A$, \change{the dimension of whose Hilbert space is denoted by $d_A$}.  The largest possible set of free operations is the set of discord nongenerating channels of the form $\ee \otimes I_B \in \bar{X}(\pi_A)$. We call such $\ee$ local discord nongenerating channels and write the set of local discord nongenerating channels as $\bar{X}_A(\pi_A)$.   Recall the definitions of the following classes of channels:
\begin{itemize}
    \item Mixed-unitary channels ({\bf MU}): $\mathcal{E}^{\mathrm{MU}}(\rho) = \sum_\mu p_\mu U_\mu \rho U_\mu^\dagger$, where $\{p_{\mu}\}$ is a probability distribution ($p_{\mu}\in[0,1],\sum_\mu p_\mu=1$) and $U_{\mu}$ are unitary channels.
    \item Isotropic channels ({\bf ISO}): $\mathcal{E}^{\mathrm{ISO}}(\rho) = (1-\gamma)W(\rho) + \gamma I/d$,
where \change{$\gamma\in[0,1]$, $d$ denotes the dimension of the Hilbert space so $I/d$ is the maximally mixed state},  and $W$ is either unitary or antiunitary.
\item Semiclassical channels ({\bf SC}): channels whose outputs are diagonal with a certain preferred basis.
\end{itemize}

It has been shown that $\ee \in \bar{X}_A(\pi_A)$ if and only if $\ee$ is commutativity-preserving \cite{Hu2012a}, and the set of commutativity-preserving channels consist of unital channels for qubits ($d_A=2$) and isotropic channels for qudits ($d_A>2$) \cite{local,Hu2012a,Guo2013}, in addition to all semiclassical channels (which always destroy discord). Note that, for qubits, the set of unital channels is equivalent to the set of mixed-unitary channels \cite{watrous}.
The structure of $\bar{X}_A(\pi_A)$ is depicted in Fig.~\fig{cp}.
\begin{figure} 
\centering
\includegraphics[width=0.5\columnwidth]{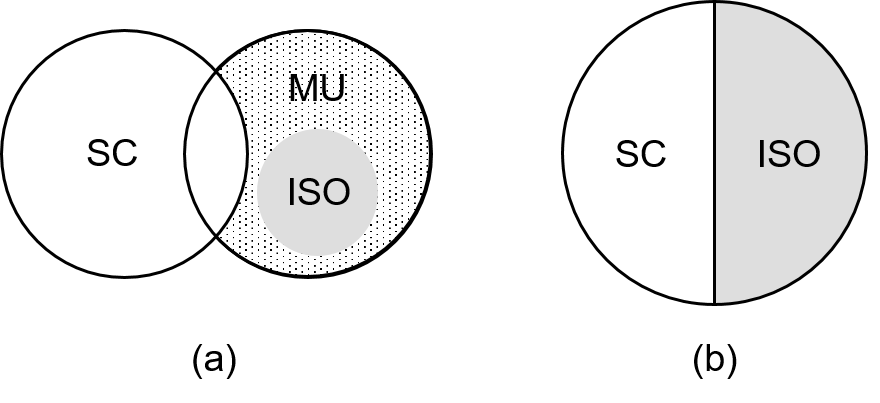}
\caption{\label{fig:cp} Structure of commutativity-preserving channels ($\bar{X}_A(\pi_A)$). Note that we define $\mathbf{SC}$ to exclude the completely depolarizing channel. (a) Qubits: note that $\mathbf{ISO}\subsetneq\mathbf{MU}$, $\mathbf{SC}\cap\mathbf{MU}$ includes e.g.~completely dephasing channels; (b) Qudits with $d>2$. Grey area ({\bf ISO}): in $X_A(\pi_A)$; Dotted area ($\mathbf{MU}\backslash(\mathbf{ISO}\cup\mathbf{SC})$): numerical evidence of being in $X_A(\pi_A)$; White area ({\bf SC}): not in $X_A(\pi_A)$.}
\end{figure}
Note that the completely depolarizing channel, which maps everything to the maximally mixed state, is in $\mathbf{SC}$ as well as $\mathbf{ISO}$. We exclude the completely depolarizing channel from $\mathbf{SC}$ so that $\mathbf{SC}\cap\mathbf{ISO}=\emptyset$.
For simplicity,  $\rho_A$ is assumed to be nondegenerate.
We are interested in the monotonicity of $\bar{D}_A$ under $\ee \otimes I$ for $\ee \in \bar{X}_A(\pi_A)$.

\subsubsection{$\pi$-commuting local operations and monotonicity}
In general, identifying operations under which some measure behaves as a monotone is a highly nontrivial task.
Due to the relative entropy form Eq.~(\ref{reldd}), the above monotonicity theorem (Theorem \ref{thm:mono}) can be applied to diagonal discord: $\bar{D}_A$ is monotonically nonincreasing under $\ee\otimes I_B \in X(\pi_A)$. Let ${X}_A(\pi_A)$ denote the set of local operations on $A$ that commute with $\pi_A$ together with identity operation on $B$.
We now identify operations that belong to ${X}_A(\pi_A)$. First, it is known that \change{$\mathbf{SC} \cap {X}_A(\pi_A) = \emptyset$} \cite{rdm}.
Ref.~\cite{rdm} also showed that unitary-isotropic channels are in ${X}_A(\pi_A)$. 
We analyze the remaining cases for qubits and qudits separately, since they exhibit different structures in the theory of $\pi$.

\paragraph{Qubit ($d_A=2$).}
For the qubit case, we derive an explicit local condition that determines if a local mixed-unitary channel is in $X_A(\pi_A)$:
\begin{lem}\label{qubitcondition}
Consider the qubit mixed-unitary channel $\mathcal{E}^{\mathrm{MU}}(\rho) = \sum_\mu p_\mu U_\mu \rho U_\mu^\dagger$. Let $\{\ket{\psi},\ket{\bar\psi}\}$ be some orthonormal basis, and $\{\ket{\eta_+},\ket{\eta_-}\}$ be the common eigenbasis of $\mathcal{E}^{\mathrm{MU}}(\ketbra{\psi}{\psi})$ and $\mathcal{E}^{\mathrm{MU}}(\ketbra{\bar{\psi}}{\bar{\psi}})$.  
Then $\mathcal{E}^{\mathrm{MU}}\in {X}_A(\pi_A)$ if and only if
\begin{equation}
 \sum_\mu p_\mu \bra{\eta_l}U_\mu\ket{\psi}\bra{\bar{\psi}}U_\mu^{\dagger}\ket{\eta_l}=0
 \label{eq:condition0}
\end{equation}
for any choice of basis $\{\ket{\psi},\ket{\bar{\psi}}\}$ and $l=+,-$.
\end{lem}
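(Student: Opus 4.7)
The plan is to unfold the commutation relation $\pi_A\circ(\mathcal{E}^{\mathrm{MU}}\otimes I_B)=(\mathcal{E}^{\mathrm{MU}}\otimes I_B)\circ\pi_A$ on arbitrary bipartite states and read off the stated matrix-element condition on $\mathcal{E}^{\mathrm{MU}}$. I first fix an arbitrary orthonormal basis $\{\ket{\psi},\ket{\bar\psi}\}$ of $A$ and consider a state $\rho_{AB}$ whose marginal $\rho_A$ is nondegenerate with that basis as eigenbasis (e.g.~$\rho_A=p\ketbra{\psi}{\psi}+(1-p)\ketbra{\bar\psi}{\bar\psi}$ with $p\neq 1/2$). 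Expanding
$$\rho_{AB}=\sum_{i,j\in\{\psi,\bar\psi\}}\ketbra{i}{j}_A\otimes\rho^B_{ij},\qquad\rho^B_{ij}:=\bra{i}_A\rho_{AB}\ket{j}_A,$$
the discord destroying map $\pi_A$ retains only the $i=j$ blocks, so $(\mathcal{E}^{\mathrm{MU}}\otimes I_B)\circ\pi_A(\rho_{AB})=\sum_i\mathcal{E}^{\mathrm{MU}}(\ketbra{i}{i})\otimes\rho^B_{ii}$.

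Next, unitality of $\mathcal{E}^{\mathrm{MU}}$ gives $\mathcal{E}^{\mathrm{MU}}(\ketbra{\psi}{\psi})+\mathcal{E}^{\mathrm{MU}}(\ketbra{\bar\psi}{\bar\psi})=\mathcal{E}^{\mathrm{MU}}(I)=I$, so these two outputs share an eigenbasis, which is precisely the $\{\ket{\eta_+},\ket{\eta_-}\}$ named in the lemma and also the eigenbasis of the $A$-marginal of $(\mathcal{E}^{\mathrm{MU}}\otimes I_B)(\rho_{AB})$. Consequently,
$$\pi_A\circ(\mathcal{E}^{\mathrm{MU}}\otimes I_B)(\rho_{AB})=\sum_l\ketbra{\eta_l}{\eta_l}_A\otimes\sum_{i,j}\bra{\eta_l}\mathcal{E}^{\mathrm{MU}}(\ketbra{i}{j})\ket{\eta_l}\,\rho^B_{ij}.$$
Expanding $\mathcal{E}^{\mathrm{MU}}(\ketbra{i}{i})$ in $\{\ket{\eta_\pm}\}$, the $i=j$ contributions of the two compositions automatically agree, and commutativity reduces to the single off-diagonal equation
$$\bra{\eta_l}\mathcal{E}^{\mathrm{MU}}(\ketbra{\psi}{\bar\psi})\ket{\eta_l}\,\rho^B_{\psi\bar\psi}+\bra{\eta_l}\mathcal{E}^{\mathrm{MU}}(\ketbra{\bar\psi}{\psi})\ket{\eta_l}\,\rho^B_{\bar\psi\psi}=0,\quad l=\pm.$$

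Finally, the two scalar coefficients are complex conjugates of each other (by Hermiticity preservation of $\mathcal{E}^{\mathrm{MU}}$) and $\rho^B_{\bar\psi\psi}=(\rho^B_{\psi\bar\psi})^\dagger$. I would argue that as $\rho_{AB}$ is perturbed by small traceless Hermitian operators supported on the $A$-off-diagonal block while keeping the marginal $\rho_A$ fixed, the block $\rho^B_{\psi\bar\psi}$ can be made to range over both Hermitian and anti-Hermitian trace-zero operators on $B$, so the only way the above linear combination can vanish universally is $\bra{\eta_l}\mathcal{E}^{\mathrm{MU}}(\ketbra{\psi}{\bar\psi})\ket{\eta_l}=0$. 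Substituting the mixed-unitary form of $\mathcal{E}^{\mathrm{MU}}$ yields exactly Eq.~(\ref{eq:condition0}); since $\{\ket{\psi},\ket{\bar\psi}\}$ was arbitrary, the condition must hold for every basis. The converse is immediate: running the same algebra in reverse shows that if the condition holds for every basis, the two compositions agree on every $\rho_{AB}$ with nondegenerate $\rho_A$. The main obstacle I anticipate is this last perturbation step — verifying that the set of admissible $\rho^B_{\psi\bar\psi}$ (subject to $\rho_{AB}\geq 0$ and a fixed nondegenerate $\rho_A$) really is rich enough to force the scalar coefficient to vanish, rather than merely to lie on some exceptional hyperplane on which the combination happens to cancel.
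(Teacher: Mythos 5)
Your proposal is correct and follows essentially the same route as the paper's proof: expand both compositions in the local eigenbasis, note that the diagonal blocks agree automatically, and reduce the commuting condition to the vanishing of the scalar $\bra{\eta_l}\mathcal{E}^{\mathrm{MU}}(\ketbra{\psi}{\bar\psi})\ket{\eta_l}$ by exploiting the freedom in the off-diagonal block $\bra{\psi}\rho_{AB}\ket{\bar\psi}$, then let the basis vary. The ``perturbation step'' you flag as the main obstacle is precisely the point the paper settles by exhibiting two explicit states whose off-diagonal blocks have real and purely imaginary diagonal entries respectively, which kills both the real and imaginary parts of the coefficient.
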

\begin{proof}

Since qubit mixed-unitary channels are commutativity-preserving, 
all input states with the same eigenbasis share a common output eigenbasis.
Let \{$\ket{0},\ket{1}$\} be the eigenbasis of $\rho_A$.
Denote the common eigenbasis of $\mathcal{E}^{\mathrm{MU}}(\ketbra{0}{0})$ and 
$\mathcal{E}^{\mathrm{MU}}(\ketbra{1}{1})$ as $\{\ket{\eta_+}, \ket{\eta_-}\}$, and the corresponding eigenvalues $\eta^i_+,\eta^i_-$ for $i=0,1$, that is, $\mathcal{E}^{\mathrm{MU}}(\ketbra{i}{i})=\sum_{l=+,-} \eta^i_l \ketbra{\eta_l}{\eta_l}$. Note that, since $\sum_i\mathcal{E}^{\mathrm{MU}}(\ketbra{i}{i})=I$, $\eta^0_l =1-\eta^1_l$ for $l=+,-$. By linearity, $\mathcal{E}^{\mathrm{MU}}(\rho_A)$ admits a spectral decomposition in the basis $\{\ket{\eta_+}, \ket{\eta_-}\}$.

We first obtain
\begin{eqnarray}
 [\mathcal{E}^{\mathrm{MU}}_A \otimes I_B] \circ \pi_A(\rho_{AB})&=&\sum_{i=0,1} \mathcal{E}^{\mathrm{MU}}(\ketbra{i}{i})\otimes \bra{i}\rho_{AB}\ket{i} \\
 &=&\sum_{i=0,1} \sum_{l=+,-}  \eta^i_l \ketbra{\eta_l}{\eta_l}\otimes \bra{i}\rho_{AB}\ket{i}.\label{epi}
\end{eqnarray}
On the other hand,
\begin{eqnarray}
 \pi_A \circ [\mathcal{E}^{\mathrm{MU}}_A \otimes I_B](\rho_{AB}) &=& \sum_{l=+,-} \ket{\eta_l}\bra{\eta_l}\otimes \bra{\eta_l}[\mathcal{E}^{\mathrm{MU}}_A \otimes I_B](\rho_{AB})\ket{\eta_l}\\
 &=& \sum_\mu \sum_{l=+,-} p_\mu \ket{\eta_l}\bra{\eta_l}\otimes \bra{\eta_l}(U_\mu\otimes I) \rho_{AB} (U_\mu^{\dagger}\otimes I)\ket{\eta_l}\\
 &=&\sum_{i,j=0,1}\sum_\mu \sum_{l=+,-} p_\mu \bra{\eta_l}U_\mu\ket{i}\bra{j}U_\mu^\dagger\ket{\eta_l}\ketbra{\eta_l}{\eta_l} \otimes \bra{i}\rho_{AB}\ket{j}\\
 &=&\sum_{i=0,1} \sum_{l=+,-} \eta^i_l \ketbra{\eta_l}{\eta_l} \otimes \bra{i}\rho_{AB}\ket{i}
 +\sum_{i\neq j}\sum_\mu\sum_{l=+,-} p_\mu \bra{\eta_l}U_\mu\ket{i}\bra{j}U_\mu^{\dagger}\ket{\eta_l} \ketbra{\eta_l}{\eta_l} \otimes \bra{i} \rho_{AB}\ket{j},\label{pie}
\end{eqnarray}
where the first line follows from the spectral decomposition of $\mathcal{E}^{\mathrm{MU}}(\rho_A)$.
Therefore, the two sides of the commuting condition Eqs.~(\ref{epi}) and (\ref{pie}) coincide  if and only if
\begin{equation}
\change{M_l} + \change{M_l}^\dagger = 0 \label{eq:condition_0}
\end{equation}
for $l=+,-$, where
\begin{equation}
\change{M_l} = \sum_\mu p_\mu \bra{\eta_l}U_\mu\ket{0}\bra{1}U_\mu^{\dagger}  \ket{\eta_l} \bra{0} \rho_{AB}\ket{1}
\end{equation}
is a matrix defined on $B$.
In other words, $\change{M_l}$ is a skew-Hermitian matrix: it has zero or pure imaginary diagonal entries. Since the diagonals of $\bra{0}\rho_{AB}\ket{1}$ can be real or imaginary depending on $\rho_{AB}$ (for example, for $\rho_{AB}=\change{\frac12}[\ket{0+}+\ket{1-}][\bra{0{+}}+\bra{1{-}}]$, \change{$\bra{00}\rho_{AB}\ket{10}=1/4$ and $\bra{01}\rho_{AB}\ket{11}=-1/4$, but for $\rho_{AB}=\change{\frac12}[\ket{0+}+i\ket{1-}][\bra{0{+}}-i\bra{1{-}}]$, $\bra{00}\rho_{AB}\ket{10}=-i/4$ and $\bra{01}\rho_{AB}\ket{11}=i/4$}), $\change{M_l}$ can only be the zero matrix so that the skew-Hermitian condition holds for arbitrary $\rho_{AB}$. 
Furthermore, notice that the eigenbasis can vary arbitrarily depending on $\rho_{AB}$, so Eq.~\eq{condition_0} must hold for any basis. 
Therefore, Eq.~\eq{condition_0} is reduced to the following final condition. Let $\{\ket{\psi},\ket{\bar{\psi}}\}$ be some orthonormal basis, and $\{\ket{\eta_+^{\psi}},\ket{\eta_-^{\psi}}\}$ be the common eigenbasis of $\mathcal{E}^{\mathrm{MU}}(\ketbra{\psi}{\psi})$ and 
$\mathcal{E}^{\mathrm{MU}}(\ketbra{\bar{\psi}}{\bar{\psi}})$.  Then $\mathcal{E}^{\mathrm{MU}}_A \otimes I_B$ and $ \pi_A$ commute if and only if
\begin{equation}
 \sum_\mu p_\mu \bra{\eta_l^{\psi}}U_\mu\ket{\psi}\bra{\bar{\psi}}U_\mu^{\dagger}\ket{\eta_l^{\psi}}=0
 \label{eq:condition}
\end{equation}
for any choice of basis $\{\ket{\psi},\ket{\bar{\psi}}\}$ and $l=+,-$.

\end{proof}

By explicitly using Lemma \ref{qubitcondition}, we can show that all isotropic channels are in ${X}_A(\pi_A)$:
\begin{thm}\label{qubit_iso}
For $d_A=2$, ${\bf ISO} \subset X_A(\pi_A)$.
\end{thm}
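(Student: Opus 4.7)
The plan is to invoke Lemma \ref{qubitcondition}. Isotropic channels are unital, and every unital qubit channel is mixed-unitary, so $\mathcal{E}^{\mathrm{ISO}}$ admits a Kraus decomposition of the form required by the lemma in both the unitary and antiunitary branches; it then suffices to verify condition \eq{condition0}. The key algebraic tool is the Pauli completeness (Fierz) identity
\begin{equation*}
\sum_{k=x,y,z}\langle u|\sigma_k|v\rangle\langle x|\sigma_k|y\rangle = 2\langle u|y\rangle\langle x|v\rangle - \langle u|v\rangle\langle x|y\rangle,
\end{equation*}
supplemented, in the antiunitary branch, by the antisymmetry $\sigma_y^T = -\sigma_y$.

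For unitary $W$, using $I/2 = \tfrac14\sum_\mu \sigma_\mu \rho \sigma_\mu$ on trace-one $\rho$ one writes $\mathcal{E}^{\mathrm{ISO}}(\rho) = (1 - 3\gamma/4)\,W\rho W^\dagger + (\gamma/4)\sum_{k=x,y,z}(\sigma_k W)\rho(\sigma_k W)^\dagger$. The channel sends $|\psi\rangle\langle\psi|$ and $|\bar\psi\rangle\langle\bar\psi|$ to operators with common eigenbasis $\{W|\psi\rangle, W|\bar\psi\rangle\}$. Taking $|\eta_+\rangle = W|\psi\rangle$ in \eq{condition0}, the $U_\mu = W$ term is killed by $\langle\bar\psi|\psi\rangle = 0$, while the three $U_\mu = \sigma_k W$ contributions, via the Fierz identity applied with $u = v = y = W|\psi\rangle$ and $x = W|\bar\psi\rangle$, collapse to a multiple of $\langle W\bar\psi|W\psi\rangle = 0$. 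The $l = -$ case is symmetric.

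For antiunitary $W = VK$ (with $K$ complex conjugation in a fixed basis), interpret $W\rho W^\dagger$ as the linear map $V\rho^T V^\dagger$ on density matrices. The qubit identity $\rho^T = \tfrac12(\rho + X\rho X + Z\rho Z - Y\rho Y)$ combined with depolarization gives the mixed-unitary form $\mathcal{E}^{\mathrm{ISO}}(\rho) = \sum_\mu p_\mu (V\sigma_\mu)\rho(V\sigma_\mu)^\dagger$ with $p_I = p_X = p_Z = \tfrac12 - \tfrac{\gamma}{4}$ and $p_Y = -\tfrac12 + \tfrac{3\gamma}{4}$, non-negative precisely when $\gamma \ge 2/3$, matching the CP range of the channel. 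The common output eigenbasis is $\{V|\psi^*\rangle, V|\bar\psi^*\rangle\}$, and $V$ cancels out of \eq{condition0}, which for $l = +$ reduces to $\sum_\mu p_\mu \langle\psi^*|\sigma_\mu|\psi\rangle\langle\bar\psi|\sigma_\mu|\psi^*\rangle$. The $\sigma_y$ term vanishes since $\sigma_y^T = -\sigma_y$ contracted with the symmetric $\psi_a\psi_b$ is zero, hence $\langle\psi^*|\sigma_y|\psi\rangle = 0$; the four-Pauli Fierz identity $\sum_\mu (\sigma_\mu)_{ab}(\sigma_\mu)_{cd} = 2\delta_{ad}\delta_{bc}$ combined with $\langle\bar\psi|\psi\rangle = 0$ then forces the remaining $I, X, Z$ contributions to cancel among themselves. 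The $l = -$ case is handled identically.

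The main obstacle is the antiunitary branch: the mixed-unitary representation is not immediate because the transpose is itself not CP, and its Pauli--Bloch expansion has a negative coefficient that is only compensated by the depolarizing part once $\gamma \ge 2/3$; verifying Lemma \ref{qubitcondition}'s condition then requires the extra identity $\langle\psi^*|\sigma_y|\psi\rangle = 0$ on top of the Fierz relation used in the unitary branch.
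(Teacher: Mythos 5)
Your proof is correct, and its overall strategy is the same as the paper's: invoke Lemma \ref{qubitcondition} and verify condition \eq{condition0} separately for the unitary and antiunitary branches by Pauli-expanding the depolarizing part and the transpose. Your unitary branch is essentially the paper's computation — the paper's explicit pairwise cancellation of the $X,Y$ and $Z,I$ contributions in its Eq.~(\ref{twirl0}) is exactly your Fierz identity evaluated at $u=v=y=W\ket{\psi}$, $x=W\ket{\bar\psi}$. The antiunitary branch is where you genuinely diverge. The paper collapses $(1-\gamma)U\rho^TU^\dagger+\gamma I/2$ to the two-term affine form $(2-\gamma)I/2-(1-\gamma)UY_V\rho Y_VU^\dagger$, with a negative coefficient, and simply reuses the two cancellations already established; this is legitimate because \eq{condition0} is linear in the channel, so signs of the expansion coefficients are irrelevant, and it makes the antiunitary case a two-line corollary. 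You instead construct a bona fide convex mixed-unitary decomposition $\sum_\mu p_\mu(V\sigma_\mu)\rho(V\sigma_\mu)^\dagger$ with $p_Y=-\tfrac12+\tfrac{3\gamma}{4}$, correctly observing that nonnegativity holds exactly on the complete-positivity range $\gamma\ge 2/3$, identify the output eigenbasis as $\{V\ket{\psi^*},V\ket{\bar\psi^*}\}$, and close the argument with the four-Pauli completeness relation together with $\bra{\psi^*}\sigma_y\ket{\psi}=0$ (which is needed because $p_Y\neq p_I=p_X=p_Z$, so the $Y$ term must vanish on its own before the equal-weight Fierz sum can be applied). Your route is longer but stays literally within the hypotheses of Lemma \ref{qubitcondition} (an honest probability distribution over unitaries) and makes the CP threshold of the antiunitary isotropic channel explicit; the paper's route is shorter at the cost of implicitly extending the lemma's condition by linearity to affine combinations.
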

\proof{
Here we show that ${\bf ISO} \subset X_A(\pi_A)$ for qubits by directly employing the condition introduced in Lemma \ref{qubitcondition}.

Unitary-isotropic channels are already shown to be in $X_A(\pi_A)$ \cite{rdm}.
One can confirm that unitary-isotropic channels indeed satisfy the condition as follows.
Consider a qubit unitary-isotropic channel $u(\rho)=(1-\gamma)U\rho U^{\dagger} +\gamma I/2=(1-\gamma)U\rho U^{\dagger} +\frac{\gamma}{4}(X\rho X+Y\rho Y+Z\rho Z + I\rho I)$, where $X,Y,Z$ are defined to be Pauli matrices in the basis $\{\ket{\psi},\ket{\bar{\psi}}\}$. %, i.e. Pauli matrix with basis $\{\ket{\alpha},\ket{\bar{\alpha}}\}$ as $Z$ basis. 
Here the basis $\{\ket{\psi},\ket{\bar{\psi}}\}$ can be arbitrarily chosen since the identity operator can be decomposed as the uniform Pauli twirling \cite{nc,wilde} in any basis.
It is clear that $\ket{\eta_{+,-}}=\{U\ket{\psi},U\ket{\bar{\psi}}\}$. One can verify that $U\ket{\psi}$ satisfies the condition as follows. The first term (unitary component) gives
\begin{equation}
 (1-\gamma)\bra{\psi}U^{\dagger}U\ket{\bar{\psi}}\bra{\psi}U^{\dagger}U\ket{\psi}=0.\label{u0}
\end{equation}
For the Pauli components, we obtain
\begin{eqnarray}
 &&\bra{\psi}U^{\dagger}X \ketbra{\bar{\psi}}{\psi}X U\ket{\psi} + 
 \bra{\psi}U^{\dagger}Y \ketbra{\bar{\psi}}{\psi}Y U\ket{\psi} + 
 \bra{\psi}U^{\dagger}Z \ketbra{\bar{\psi}}{\psi}Z U\ket{\psi} + \bra{\psi}U^{\dagger} \ketbra{\bar{\psi}}{\psi} U\ket{\psi}\nonumber \\
 &=&\bra{\psi}U^{\dagger} \ketbra{{\psi}}{\bar\psi} U\ket{\psi} - 
 \bra{\psi}U^{\dagger} \ketbra{{\psi}}{\bar\psi} U\ket{\psi} - 
 \bra{\psi}U^{\dagger} \ketbra{\bar{\psi}}{\psi} U\ket{\psi} + \bra{\psi}U^{\dagger} \ketbra{\bar{\psi}}{\psi} U\ket{\psi} = 0,\label{twirl0}
\end{eqnarray}
by plugging in the Pauli matrices.
It can be seen that the terms corresponding to $X,Y$ and $Z,I$ respectively cancel each other. The condition holds for $U\ket{\bar{\psi}}$ as well.
So we conclude that $u\in X_A(\pi_A)$.

 We now show that any antiunitary-isotropic channel $\bar{u}(\rho)=(1-\gamma) U\rho^T U^{\dagger} +\gamma I/2$ also satisfies the condition.
Let \{$V\ket{\psi},V\ket{\bar{\psi}}$\} be the basis with respect to which the transpose is taken, where $V$ is unitary.  % and $\{\ket{\alpha},\ket{\bar{\alpha}}\}$ is the eigenbasis of $\rho$. 
Notice that transpose operation can be written as $\rho^T=\frac{1}{2}(\rho+X_V\rho X_V - Y_V\rho Y_V +Z_V\rho Z_V)$, where $X_V=VXV^{\dagger},Y_V=VYV^{\dagger},Z_V=VZV^{\dagger}$ are Pauli matrices in the transposition basis. So
\begin{eqnarray}
\bar{u}(\rho)&=&\frac{1-\gamma}{2}U(\rho+X_V\rho X_V - Y_V\rho Y_V +Z_V\rho Z_V)U^{\dagger}+\gamma \frac{I}{2}\\
&=&(2-\gamma)\frac{I}{2} -(1-\gamma)UY_V\rho Y_VU^{\dagger}.
\end{eqnarray}
 
We are now ready to examine whether $\bar{u}(\rho)$ satisfies the condition.
Due to Eq.~(\ref{twirl0}), the first term gives zero.  Notice that the new eigenbasis is  $\ket{\eta_{+,-}}=\{U'\ket{\psi},U'\ket{\bar{\psi}}\}$, where $U'= UY_V$ is unitary. So the second term also gives zero due to Eq.~(\ref{u0}). So $\bar{u}\in X_A(\pi_A)$.
\qed
}

Therefore, \change{combining with the fact that local semiclassical channels always output classical-quantum states (with zero discord and diagonal discord) by definition,} we obtain the following result for qubits:
\begin{cor}
For $d_A=2$, diagonal discord is monotonically nonincreasing under ${\bf SC}\cup {\bf ISO}$.
\end{cor}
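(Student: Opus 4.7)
The plan is to split $\mathbf{SC}\cup\mathbf{ISO}$ into its two components and handle each separately, since both halves have already been set up by earlier results.

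For the isotropic piece, I would appeal to Theorem~\ref{qubit_iso}, which places every local qubit isotropic channel in $X_A(\pi_A)$, so that $\ee\otimes I_B$ commutes with $\pi_A$ for every $\ee\in\mathbf{ISO}$. Equation~(\ref{reldd}) rewrites diagonal discord in the nondegenerate regime (assumed throughout the qubit analysis) as $\bar{D}_A(\rho_{AB})=S(\rho_{AB}\|\pi_A(\rho_{AB}))$, a distance-based resource measure built from the quantum relative entropy, which satisfies the data-processing inequality. Theorem~\ref{thm:mono} then applies directly and yields $\bar{D}_A([\ee\otimes I_B](\rho_{AB}))\leq \bar{D}_A(\rho_{AB})$ for every $\ee\in\mathbf{ISO}$.

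For the semiclassical piece, I would argue by faithfulness rather than by commutation. By definition any $\ee\in\mathbf{SC}$ outputs a state diagonal in a fixed preferred basis, so $\sigma:=[\ee\otimes I_B](\rho_{AB})$ is always a classical-quantum state, hence a fixed point of $\pi_A$. Plugging into $\bar{D}_A(\sigma)=S(\sigma\|\pi_A(\sigma))=S(\sigma\|\sigma)=0$ gives a value trivially bounded above by $\bar{D}_A(\rho_{AB})$. Taking the union of the two arguments then yields the corollary.

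I expect the only subtlety to be a pair of housekeeping checks, not any real mathematical obstacle. First, Theorem~\ref{thm:mono} should be invoked in the same nondegeneracy regime used for Theorem~\ref{qubit_iso}, so that $\pi_A$ is single-valued and no minimization lingers in Equation~(\ref{reldd}); the paper has already imposed this assumption. Second, one must confirm that $\mathbf{SC}\cup\mathbf{ISO}$ is fully covered — in particular, the completely depolarizing channel, which the paper has removed from $\mathbf{SC}$, belongs to $\mathbf{ISO}$ and is absorbed by the first case, as Figure~\ref{fig:cp}(a) makes clear. No new calculation is required beyond quoting the two theorems.
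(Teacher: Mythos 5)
Your proposal is correct and follows essentially the same route as the paper: the isotropic part is handled by combining Theorem~\ref{qubit_iso} (${\bf ISO}\subset X_A(\pi_A)$) with the monotonicity theorem for commuting operations via the relative-entropy form of diagonal discord, and the semiclassical part is dispatched by noting that such channels output classical-quantum states of zero diagonal discord. Your two housekeeping remarks (the nondegeneracy assumption and the exclusion of the completely depolarizing channel from $\mathbf{SC}$) are also consistent with the paper's conventions.
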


However, the condition in Lemma \ref{qubitcondition} does not hold in general, which implies that ${X}_A(\pi_A)\subsetneq\mathbf{MU}$ for qubits. For instance, consider $\ee(\cdot)=\frac{1}{3}I(\cdot)I + \frac{2}{3} H (\cdot)H$ where $H$ is the Hadamard transformation in the computational basis $\{\ket{0},\ket{1}\}$. Straightforward calculation gives $\ket{\eta_+}=\frac{1}{\sqrt{N}}(\ket{0}+\frac{-1+\sqrt{5}}{2}\ket{1})$ and $\ket{\eta_-}=\frac{1}{\sqrt{N}}(\frac{-1+\sqrt{5}}{2}\ket{0}-\ket{1})$ where $N$ is the normalization factor. Then
\begin{eqnarray*}
\sum_i p_i \bra{\eta_+}U_i\ket{1}\bra{0}U_i^{\dagger}\ket{\eta_+}
=\frac{1}{3}\bracket{\eta_+}{0}\bracket{1}{\eta_+}+\frac{2}{3}\bracket{\eta_+}{+}\bracket{-}{\eta_+}=\frac{1}{3N}(\sqrt{5}-1) \neq 0.
\end{eqnarray*} 
So this probabilistic Hadamard is not in ${X}_A(\pi_A)$.
We conjecture (which is not important for our current purpose) that ${\bf ISO} = X_A(\pi_A)$. That is, qubit mixed-unitary channels that are not isotropic all fail to satisfy the condition.

For qubit channels that live in ${\bf MU}\setminus X_A(\pi_A)$, the current idea for proving monotonicity do not apply.
However, we provide numerical results which strongly indicate that diagonal discord is monotone under such channels as well. Fig.~\fig{monotonicity} displays the comparison between diagonal discord before and after the action of several typical non-isotropic mixed-unitary channels, for a large number of randomly generated input states. 
It can be seen that all data points reside on the nonincreasing side. All other channels that we have analyzed exhibit similar behaviors. 
We put this as a conjecture at the moment:
\begin{conj}
For $d_A= 2$, diagonal discord is monotonically nonincreasing under any local discord nongenerating channel.
\end{conj}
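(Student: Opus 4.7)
The plan is to bypass the commuting-condition obstruction by combining the data processing inequality (DPI) with a Pythagorean-type inequality for relative entropy against classical-quantum states in a fixed local basis. The key observation is that, since $\ee$ is commutativity-preserving (the qubit discord-nongenerating channels are exactly the unital ones, which are commutativity-preserving by \cite{Hu2012a}), the operators $\ee(\Pi_0^A)$ and $\ee(\Pi_1^A)$ share a common eigenbasis $\{\ket{\eta_+},\ket{\eta_-}\}$. Writing $\sigma := (\ee\otimes I_B)(\rho_{AB})$, the reduced state $\sigma_A = \ee(\rho_A)$ therefore has $\{\ket{\eta_\pm}\}$ as an eigenbasis (uniquely so in the nondegenerate case). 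Consequently, $(\ee\otimes I_B)\circ\pi_A(\rho_{AB}) = \sum_i \ee(\Pi_i^A)\otimes\bra{i}\rho_{AB}\ket{i}$ is classical-quantum in the basis $\{\ket{\eta_\pm}\}$, even when it differs from $\pi_A\circ(\ee\otimes I_B)(\rho_{AB})$.

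The key lemma I would first establish is a Pythagorean identity: for any bipartite state $\sigma$ with $\sigma_A$ diagonal in some orthonormal basis $\{\ket{e_l}\}$, and any state $\tau = \sum_l \ketbra{e_l}{e_l}\otimes\tau_l$ that is classical-quantum in the same basis,
\begin{equation*}
S(\sigma\|\tau) = S(\sigma\|\Pi^e(\sigma)) + S(\Pi^e(\sigma)\|\tau),
\end{equation*}
where $\Pi^e(\sigma) := \sum_l\ketbra{e_l}{e_l}\otimes\bra{e_l}\sigma\ket{e_l}$. This follows by a short direct calculation using the block-diagonal form of $\log\tau$ in the $\{\ket{e_l}\}$ basis, which yields $\Tr(\sigma\log\tau) = \Tr(\Pi^e(\sigma)\log\tau)$; combined with the standard identity $S(\sigma\|\Pi^e(\sigma)) = S(\Pi^e(\sigma)) - S(\sigma)$, the Pythagorean form emerges. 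Nonnegativity of the second term immediately gives the Gibbs-type bound $S(\sigma\|\Pi^e(\sigma)) \leq S(\sigma\|\tau)$, i.e., $\Pi^e(\sigma)$ is the relative-entropy-closest classical-quantum state to $\sigma$ in the fixed basis $\{\ket{e_l}\}$.

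With these ingredients, monotonicity follows by chaining three inequalities. Assume first that $\rho_A$ and $\sigma_A$ are nondegenerate, so $\bar{D}_A(\sigma) = S(\sigma\|\pi_A(\sigma))$ with $\pi_A$ acting in the $\{\ket{\eta_\pm}\}$ basis. Taking $\tau = (\ee\otimes I_B)\circ\pi_A(\rho_{AB})$ in the Gibbs-type bound yields $\bar{D}_A(\sigma) \leq S(\sigma\|(\ee\otimes I_B)\circ\pi_A(\rho_{AB}))$, and the DPI for the CPTP map $\ee\otimes I_B$ applied to the pair $(\rho_{AB},\pi_A(\rho_{AB}))$ upper-bounds this by $S(\rho_{AB}\|\pi_A(\rho_{AB})) = \bar{D}_A(\rho_{AB})$, establishing the claim.

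The main subtlety I expect is the degenerate case, where $\rho_A$ or $\sigma_A$ equals $I/2$ (note that unital $\ee$ fixes $I/2$). Then $\pi_A$ is not unique and $\bar{D}_A$ involves a minimization over eigenbases; I would handle this by selecting a $\pi_A^*$ that realizes $\bar{D}_A(\rho_{AB})$ and observing that its associated common output basis $\{\ket{\eta_\pm^*}\}$ remains a valid eigenbasis of $\sigma_A$, so the Gibbs step still upper-bounds the minimum defining $\bar{D}_A(\sigma)$. A pleasant feature of this strategy is that it invokes only commutativity preservation of $\ee$, not commutation with $\pi_A$, so if correct it would elevate the conjecture to a theorem valid in every local dimension and provide a unified alternative to the $X_A(\pi_A)$-based argument used in \thrm{qubit_iso}.
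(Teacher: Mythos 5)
The paper offers no proof of this statement: it is explicitly left as a conjecture, supported only by the numerics of Fig.~\ref{fig:monotonicity}, and the paper's analytic machinery (Theorem~\ref{thm:mono} applied to channels in $X_A(\pi_A)$) covers only ${\bf ISO}\cup{\bf SC}$. Your argument is therefore not a rephrasing of anything in the paper, and as far as I can check it is correct and settles the conjecture. The chain $\bar{D}_A(\sigma)\leq S(\sigma\|\mathcal{P}_\eta(\sigma))\leq S(\sigma\|(\ee\otimes I_B)\pi_A(\rho_{AB}))\leq S(\rho_{AB}\|\pi_A(\rho_{AB}))=\bar{D}_A(\rho_{AB})$, with $\sigma=(\ee\otimes I_B)(\rho_{AB})$ and $\mathcal{P}_\eta$ the pinching in the common eigenbasis $\{\ket{\eta_\pm}\}$, holds link by link: for a unital qubit channel $\ee(\Pi_1^A)=I-\ee(\Pi_0^A)$, so the common eigenbasis exists automatically and diagonalizes $\sigma_A=\sum_i p_i\ee(\Pi_i^A)$, making $\mathcal{P}_\eta$ an admissible choice in the minimum defining $\bar{D}_A(\sigma)$; the state $(\ee\otimes I_B)\pi_A(\rho_{AB})=\sum_l\ketbra{\eta_l}{\eta_l}\otimes\sum_i\eta_l^i\bra{i}\rho_{AB}\ket{i}$ is classical-quantum in that basis, so your Pythagorean identity applies (the support inclusions it needs follow from finiteness of the right-hand side, which is guaranteed by DPI together with $\bar{D}_A(\rho_{AB})\leq\log d_A$); and the final step is DPI. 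Two small repairs: the qubit discord-nongenerating channels are the unital ones \emph{together with} the semiclassical ones, not only the unital ones, but semiclassical channels output classical-quantum states so monotonicity is trivial there (and their outputs commute anyway, so your argument also covers them); and in the degenerate cases your fix --- run the argument with a minimizing $\pi_A^*$ for the input and observe that the induced $\{\ket{\eta_\pm^*}\}$ is still \emph{an} eigenbasis of $\sigma_A$, so the first inequality survives as an upper bound on the minimum --- is exactly right. What this buys over the paper's route is substantial: it never invokes the commuting condition $\pi_A\circ(\ee\otimes I_B)=(\ee\otimes I_B)\circ\pi_A$, only commutativity preservation, so it covers ${\bf MU}\setminus X_A(\pi_A)$ (e.g., the probabilistic Hadamard, where the paper's method provably fails) and reproduces the $d_A>2$ monotonicity of Theorem~\ref{qudit_iso} in one stroke. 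I recommend writing out the Pythagorean lemma with its support caveats and the degenerate bookkeeping explicitly, but I do not see a gap.
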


\begin{figure*}
\begin{center}
\subfigure[]{
\includegraphics[width=0.32\textwidth]{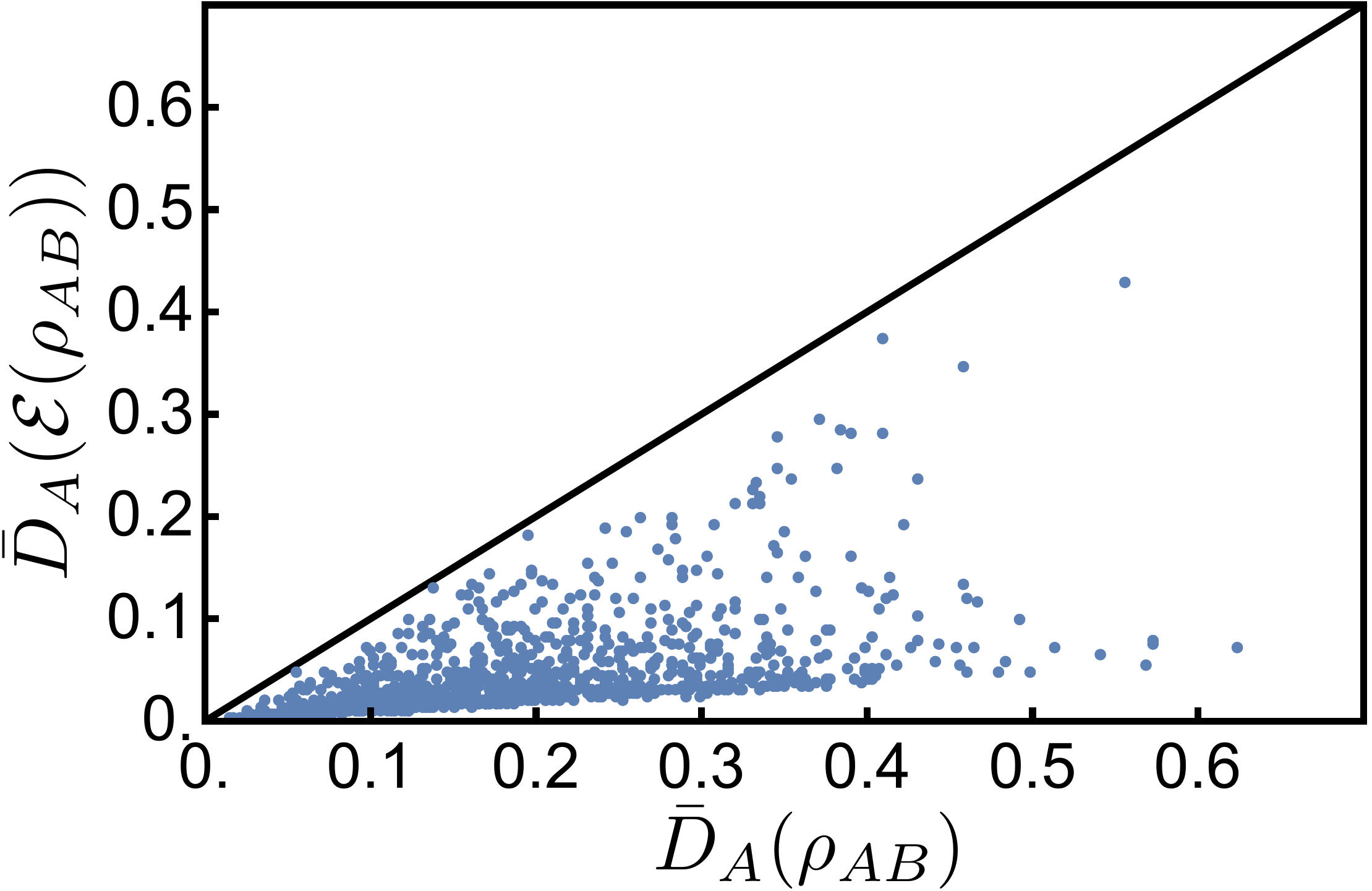}}
\subfigure[]{
\includegraphics[width=0.32\textwidth]{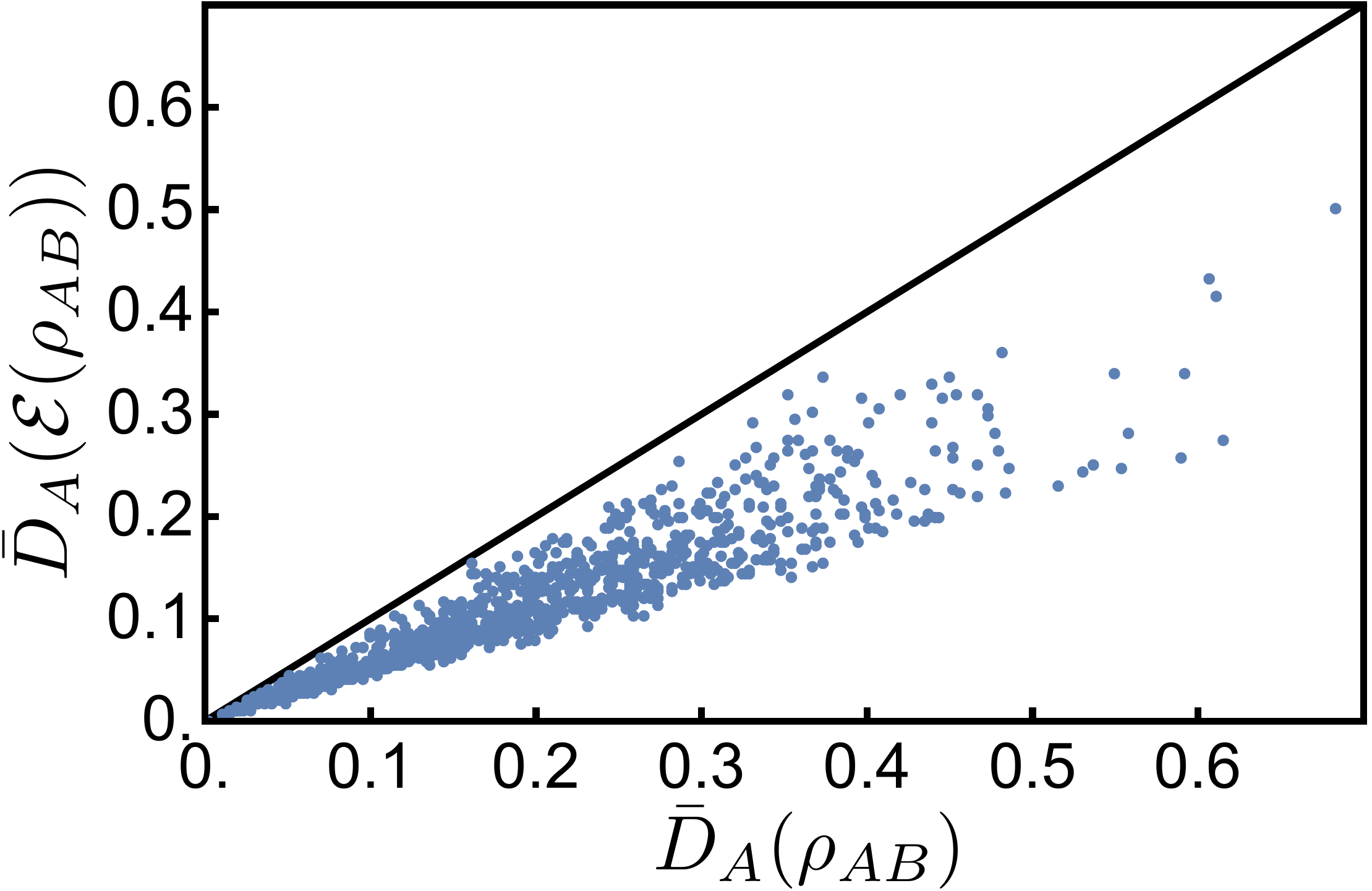}}
\subfigure[]{
\includegraphics[width=0.32\textwidth]{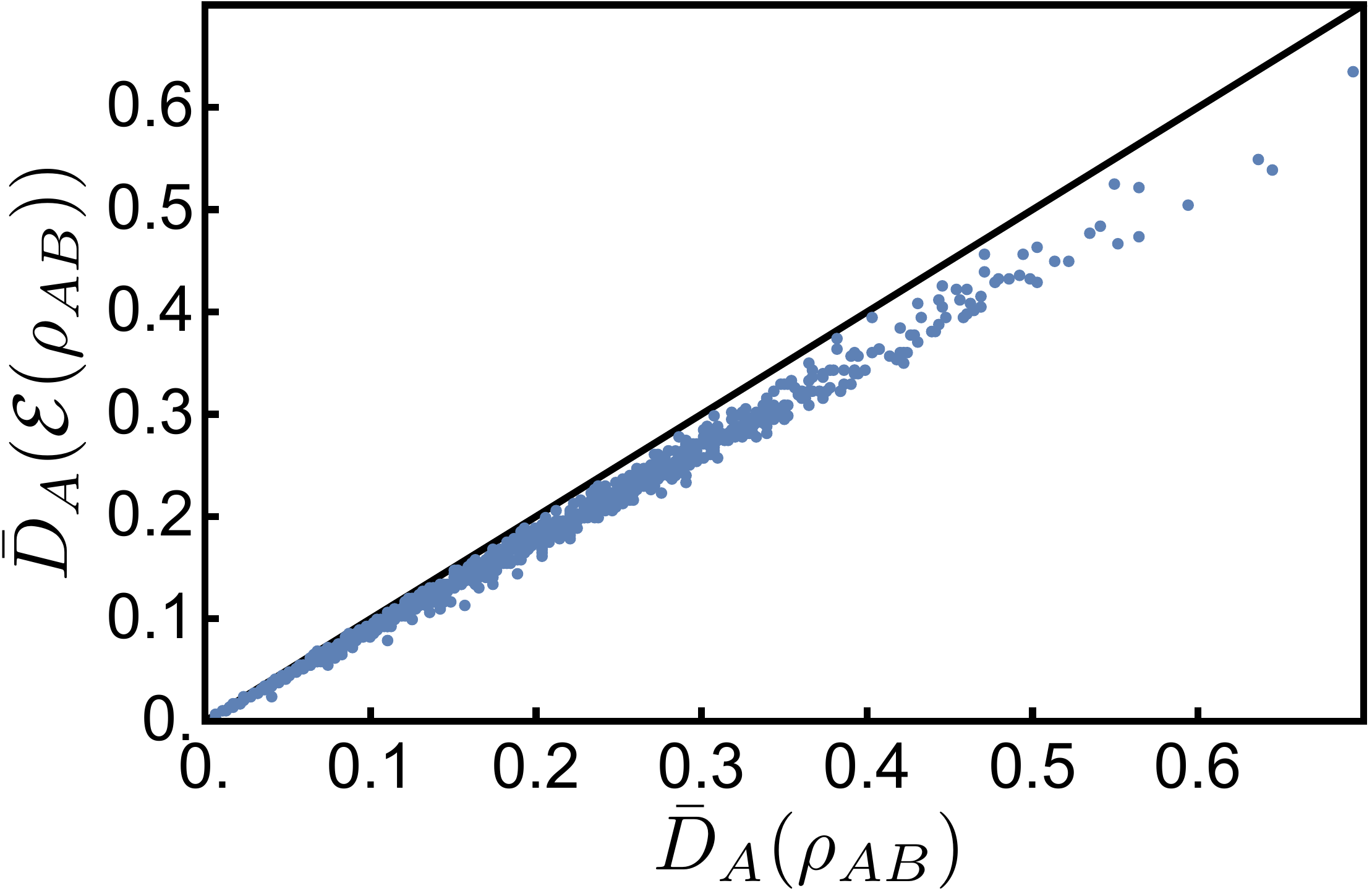}}
\caption{
Comparison of diagonal discord of the input and output states of channels $\mathcal{E} = \ee^{\rm MU}_A\otimes I_B$ such that $\ee^{\rm MU}\in {\bf MU} \setminus X_A(\pi_A)$.
The black line $\bar{D}_A(\ee(\rho_{AB}))=\bar{D}_A(\rho_{AB})$ serves as a baseline for the comparison.
The local mixed unitary channels considered are (a)  $\ee^{\rm MU}(\rho)=\frac{1}{3}\rho +\frac{2}{3}H\rho H$,  (b) $\ee^{\rm MU}(\rho)=\frac{1}{3}\rho +\frac{2}{3}R_n(\pi/2)\rho R_n(\pi/2)^{\dagger}$ where $R_{\bf n}(\pi/2)$ is the $\pi/2$ rotation with respect to the axis ${\bf n}\propto (1,1,1)$, and (c) $\ee^{\rm MU}(\rho)=\frac{1}{6}\rho +\frac{1}{3}R_X(\pi/10)\rho R_X(\pi/10)^{\dagger}+\frac{1}{2}R_Z(\pi/5)\rho R_Z(\pi/5)^{\dagger}$ where $R_X$ and $R_Z$ are rotations with respect to $X$ axis and $Z$ axis respectively. The choice of these channels is arbitrary. The number of samples is set to 1000 for each channel.}
\label{fig:monotonicity}
 \end{center}  
\end{figure*}

\paragraph{Qudit ($d_A>2$).}
The analysis for $d_A>2$ turns out to be simpler.
In fact, it can be shown in general dimensions that ${\bf ISO} \subset X_A(\pi_A)$. The main step of the proof is to explicitly write out the eigenbasis after an antiunitary transformation.
\begin{thm}\label{qudit_iso}
For $d_A\geq 2$, ${\bf ISO} \subset X_A(\pi_A)$. In particular, for $d_A>2$, ${\bf ISO} = X_A(\pi_A)$.
\end{thm}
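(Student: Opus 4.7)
The plan is to verify the commuting condition $\pi_A \circ (\mathcal{E}^{\mathrm{ISO}} \otimes I_B) = (\mathcal{E}^{\mathrm{ISO}} \otimes I_B) \circ \pi_A$ directly for every isotropic channel, and then close the ``in particular'' half by invoking the classification of commutativity-preserving channels quoted just before the statement.

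First, I would fix a spectral decomposition $\rho_A = \sum_i p_i |i\rangle\langle i|$ and identify the eigenbasis of $\mathcal{E}^{\mathrm{ISO}}(\rho_A) = (1-\gamma)W(\rho_A) + (\gamma/d)I$. Since $W$ is unitary or antiunitary it maps rank-one projectors to rank-one projectors, so $W(|i\rangle\langle i|) = |e_i\rangle\langle e_i|$ for some orthonormal basis $\{|e_i\rangle\}$. Consequently $\mathcal{E}^{\mathrm{ISO}}(|i\rangle\langle i|)$ is diagonal in $\{|e_j\rangle\}$ with eigenvalues $\eta_i^{(j)} = (1-\gamma)\delta_{ij} + \gamma/d$, and by linearity $\{|e_j\rangle\}$ is the eigenbasis of $\mathcal{E}^{\mathrm{ISO}}(\rho_A)$ itself. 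Substituting into $(\mathcal{E}^{\mathrm{ISO}} \otimes I_B) \circ \pi_A(\rho_{AB})$ and collecting terms gives, after telescoping the $\gamma/d$ contribution into $\rho_B$, the expression $\sum_j |e_j\rangle\langle e_j| \otimes \bigl[(1-\gamma)\langle j|\rho_{AB}|j\rangle + (\gamma/d)\rho_B\bigr]$.

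Next, I would compute $\pi_A \circ (\mathcal{E}^{\mathrm{ISO}} \otimes I_B)(\rho_{AB}) = \sum_j |e_j\rangle\langle e_j| \otimes \langle e_j|(\mathcal{E}^{\mathrm{ISO}}\otimes I_B)(\rho_{AB})|e_j\rangle$ and reduce the check to the identity $\langle e_j|(W\otimes I_B)(\rho_{AB})|e_j\rangle = \langle j|\rho_{AB}|j\rangle$. In the unitary case this is immediate from $|e_j\rangle = U|j\rangle$ and $(W\otimes I_B)(\rho_{AB}) = (U\otimes I_B)\rho_{AB}(U^\dagger\otimes I_B)$. The antiunitary case is where I expect the main obstacle, and it is the part explicitly flagged in the theorem's preamble: writing $W(\rho) = U\rho^T U^\dagger$ with the transpose taken in a basis $\{|v_k\rangle\}$, one has $|e_j\rangle = U|\bar{j}\rangle$ (complex conjugation in that basis) and $(W\otimes I_B)(\rho_{AB}) = (U\otimes I_B)\rho_{AB}^{T_A}(U^\dagger\otimes I_B)$. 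The key lemma to extract is $\langle \bar{j}|_A\,\rho_{AB}^{T_A}\,|\bar{j}\rangle_A = \langle j|_A\,\rho_{AB}\,|j\rangle_A$, which I would verify by expanding $\rho_{AB} = \sum_{rs}|v_r\rangle\langle v_s|_A\otimes B_{rs}$ and using $\langle\bar{j}|v_s\rangle = \langle v_s|j\rangle$ together with $\langle v_r|\bar{j}\rangle = \langle j|v_r\rangle$ so that the $T_A$ and the complex conjugation cancel. This identity closes the commuting condition in the antiunitary case.

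Finally, the equality $\mathbf{ISO} = X_A(\pi_A)$ for $d_A > 2$ is bookkeeping: by the classification recalled earlier, $\bar{X}_A(\pi_A) = \mathbf{ISO}\cup\mathbf{SC}$ in qudits, while $\mathbf{SC}\cap X_A(\pi_A) = \emptyset$ by \cite{rdm}. Combined with $X_A(\pi_A)\subset \bar{X}_A(\pi_A)$ and the inclusion $\mathbf{ISO}\subset X_A(\pi_A)$ just established, this forces $X_A(\pi_A) = \mathbf{ISO}$.
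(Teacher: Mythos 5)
Your proposal is correct and follows essentially the same route as the paper: both reduce the commuting condition to identifying the eigenbasis of the channel output's marginal, which for the antiunitary case is the complex-conjugate basis $\{U\ket{\bar{j}}\}$ taken in the transposition basis, and both close the qudit equality by the same classification bookkeeping. Your extraction of the identity $\bra{\bar{j}}\rho_{AB}^{T_A}\ket{\bar{j}} = \bra{j}\rho_{AB}\ket{j}$ as an explicit lemma is just a cleaner packaging of the index computation the paper carries out directly.
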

\begin{proof}
  Here we provide a general proof of ${\bf ISO} \subset X_A(\pi_A)$. Note that Theorem \ref{qubit_iso} for qubit systems is just a special case of this result. For $d_A>2$ we have $\bar{X}_A(\pi_A) = {\bf SC} \cup {\bf ISO}$ \cite{Hu2012a,Guo2013}, so ${\bf ISO} = X_A(\pi_A)$.

Again, recall that unitary-isotropic channels are shown to be in $X_A(\pi_A)$ \cite{rdm}. Here we show that any antiunitary-isotropic channel $\bar{u}(\rho) = (1-\gamma)U\rho^T U^\dagger + \gamma I/d$ is also in $X_A(\pi_A)$ for any $d_A$.
Let \{$\ket{t_i}$\} be the \change{complete orthonormal} basis with respect to which the transposition is taken.
Suppose the input state $\rho_{AB}$ reads $\rho_{AB}=\sum_{ijkl}q_{ijkl}\ketbra{t_i}{t_j}\otimes \ketbra{r_k}{r_l}$, where $i,j$ and $k,l$ are respectively indices of $A$ and $B$, and $\{\ket{r_{k,l}}\}$ denotes some basis of the Hilbert space of $B$. 
%Since $\rho_A=\Tr_B \rho_{AB}$ is Hermitian, $\sum_k q_{ijkk}=\sum_k q_{jikk}^*$. 
Given that the spectral decomposition of $A$ reads $\rho_A = \sum_\alpha\lambda_\alpha\ketbra{\alpha}{\alpha}$, we have
\begin{eqnarray}
\pi_A(\rho_{AB})
&=&\sum_\alpha\sum_{ij}\sum_{kl} q_{ijkl}\bracket{\alpha}{t_i}\bracket{t_j}{\alpha}\ketbra{\alpha}{\alpha}\otimes \ketbra{r_k}{r_l}\\
&=&\sum_\alpha\sum_{ijmn}\sum_{kl} q_{ijkl}\bracket{\alpha}{t_i}\bracket{t_j}{\alpha}\bracket{t_m}{\alpha}\bracket{\alpha}{t_n}\ketbra {t_m}{t_n}\otimes \ketbra{r_k}{r_l},
\end{eqnarray}
and so
\begin{equation}\label{upi}
[\bar{u}_A\otimes I_B] \circ \pi_A(\rho_{AB})
=(1-\gamma)\sum_\alpha\sum_{ijmn}\sum_{kl} q_{ijkl}\bracket{\alpha}{t_i}\bracket{t_j}{\alpha}\bracket{t_m}{\alpha}\bracket{\alpha}{t_n}U\ketbra{t_n}{t_m}U^{\dagger}\otimes \ketbra{r_k}{r_l}
+\gamma\frac{I_A}{d_A}\otimes \rho_B.
\end{equation}
On the other hand,
\begin{equation}\label{ubar}
[\bar{u}_A\otimes I_B](\rho_{AB})
=(1-\gamma)\sum_{ij}\sum_{kl} q_{ijkl} U\ketbra{t_j}{t_i}U^{\dagger}\otimes \ketbra{r_k}{r_l}+\gamma\frac{I_A}{d_A}\otimes \rho_B,
\end{equation}
which involves a partial transpose. 
In order to express the action of $\pi_A$, we need to find the eigenbasis of the reduced density operator 
\begin{equation}
 \mathrm{tr}_B([\bar{u}_A\otimes I_B](\rho_{AB}))=(1-\gamma)\sum_{ij}\sum_{k} q_{ijkk} U\ketbra{t_j}{t_i}U^{\dagger}+\gamma\frac{I_A}{d_A} = (1-\gamma)U\rho_A^T U^\dagger + \gamma\frac{I_A}{d_A}.
\end{equation}
We essentially need to find the eigenbasis of $\rho_A^T$.
Rewrite $\rho_A$ as 
$\rho_A = \sum_{ij}\sum_\alpha \lambda_\alpha\bracket{t_i}{\alpha}\bracket{\alpha}{t_j}\ketbra{t_i}{t_j}$,
that is, $\sum_k q_{ijkk} = \sum_\alpha \lambda_\alpha\bracket{t_i}{\alpha}\bracket{\alpha}{t_j}$. So we obtain
\begin{eqnarray}
 \rho_A^T = \rho_A^* &=& \sum_{ij}\sum_\alpha \lambda_\alpha^* \bracket{t_i}{\alpha}^*\bracket{\alpha}{t_j}^*\ketbra{t_i}{t_j}\\
 &=& \sum_{ij}\sum_\alpha \lambda_\alpha\bracket{\alpha}{t_i}\bracket{t_j}{\alpha}\ketbra{t_i}{t_j}\\
 &=&\sum_\alpha \lambda_\alpha \left(\sum_i\bracket{\alpha}{t_i}\ket{t_i}\right)\left(\sum_i\bracket{\alpha}{t_i}\ket{t_i}\right)^\dagger,
\end{eqnarray}
where we used the fact that eigenvalues $\lambda_\alpha$ are real for the second line.
Therefore, $\{\ket{\bar{\alpha}}\}$ with $\ket{\bar{\alpha}}\equiv\sum_i\bracket{\alpha}{t_i}\ket{t_i}$ forms the eigenbasis of $\rho_A^T$, and hence $\{U\ket{\bar{\alpha}}\}$ is the eigenbasis of $\mathrm{tr}_B([\bar{u}_A\otimes I_B](\rho_{AB}))$.
So starting from Eq.~(\ref{ubar}), we obtain
\begin{eqnarray}
\pi_A \circ [\bar{u}_A\otimes I_B](\rho_{AB})&=&(1-\gamma)\sum_\alpha\sum_{ij}\sum_{kl} q_{ijkl} \bracket{\bar{\alpha}}{t_j}\bracket{t_i}{\bar{\alpha}} U\ketbra{\bar{\alpha}}{\bar{\alpha}}U^{\dagger}\otimes \ketbra{r_k}{r_l}+\gamma\frac{I}{d_A}\otimes \rho_B\\
&=&(1-\gamma)\sum_\alpha\sum_{ijmn}\sum_{kl} q_{ijkl} \bracket{\bar{\alpha}}{t_j}\bracket{t_i}{\bar{\alpha}} \bracket{t_n}{\bar{\alpha}}\bracket{\bar{\alpha}}{t_m}U\ketbra{t_n}{t_m}U^{\dagger}\otimes \ketbra{r_k}{r_l}+\gamma\frac{I}{d_A}\otimes \rho_B\\
&=&(1-\gamma)\sum_\alpha\sum_{ijmn}\sum_{kl} q_{ijkl} \bracket{t_j}{\alpha}\bracket{\alpha}{t_i} \bracket{\alpha}{t_n}\bracket{t_m}{\alpha}U\ketbra{t_n}{t_m}U^{\dagger}\otimes \ketbra{r_k}{r_l}+\gamma\frac{I}{d_A}\otimes \rho_B,
\end{eqnarray}
where we used $\bracket{t_i}{\bar\alpha} = \bra{t_i}(\sum_j\bracket{\alpha}{t_j}\ket{t_j})  = \bracket{\alpha}{t_i}$ for the third line. By comparing to Eq.~(\ref{upi}), we conclude that $\pi_A \circ [\bar{u}_A\otimes I_B](\rho_{AB}) = [\bar{u}_A\otimes I_B] \circ \pi_A(\rho_{AB})$, so $\bar{u}\in X_A(\pi_A)$.

\end{proof}

The complete result for qudits then follows, \change{again by combining with the fact that local semiclassical channels always output classical-quantum states}:
\begin{cor}
For $d_A>2$, diagonal discord is monotonically nonincreasing under any local discord nongenerating channel.
\end{cor}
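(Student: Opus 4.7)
My plan is to observe that this corollary is essentially a packaging of the preceding results, and to deduce it by splitting the class of local discord nongenerating channels into two pieces and handling each separately. The classification already recalled in the paper states that for $d_A>2$ a local channel $\ee$ is discord nongenerating (i.e.\ $\ee\otimes I_B\in \bar{X}_A(\pi_A)$) if and only if $\ee\in\mathbf{SC}\cup\mathbf{ISO}$. So it suffices to verify monotonicity of $\bar{D}_A$ separately for local isotropic channels and for local semiclassical channels.

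For the isotropic case I would invoke Theorem~\ref{qudit_iso} directly: it gives $\mathbf{ISO}\subset X_A(\pi_A)$, meaning that $\ee\otimes I_B$ actually commutes with the discord destroying map $\pi_A$. Combined with the relative entropy representation $\bar{D}_A(\rho_{AB}) = S(\rho_{AB}\|\pi_A(\rho_{AB}))$ from \eq{reldd} (in the nondegenerate case $\pi_A$ is unique, so no minimization issue arises; in the degenerate case one still has $\bar{D}_A \leq S(\rho_{AB}\|\pi_A(\rho_{AB}))$ for any choice of local eigenbasis, which is what one needs for an upper bound on the output), Theorem~\ref{thm:mono} and the data processing inequality for relative entropy immediately yield $\bar{D}_A(\ee\otimes I_B(\rho_{AB})) \leq \bar{D}_A(\rho_{AB})$.

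For the semiclassical case the argument is essentially trivial but worth stating explicitly. A semiclassical channel by definition outputs states that are diagonal in a fixed preferred basis on $A$. Therefore $[\ee\otimes I_B](\rho_{AB})$ is always a classical-quantum state: it can be written as $\sum_i p_i\,\ketbra{i}{i}_A \otimes \sigma^B_i$ in the preferred basis, which is a fixed point of (any choice of) $\pi_A$. By the faithfulness of diagonal discord established via \eq{reldd}, such states have $\bar{D}_A = 0$, which is the minimum value, so monotonicity is automatic. Combining the two cases covers all of $\bar{X}_A(\pi_A)$ and completes the argument.

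There is really no serious obstacle here, since all of the substantive work has already been done in Theorem~\ref{qudit_iso} and in the classification of commutativity-preserving channels cited from \cite{Hu2012a,Guo2013}. The only mild subtlety to keep in mind is the treatment of possible degeneracies in $\rho_A$ or in the output state, but since $\pi_A$ acts as the identity on the classical-quantum target set and since the relative entropy $S(\rho_{AB}\|\pi_A(\rho_{AB}))$ upper-bounds $\bar{D}_A$ for every admissible eigenbasis choice, the monotonicity chain goes through in full generality.
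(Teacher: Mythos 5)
Your proposal is correct and follows essentially the same route as the paper: the corollary is obtained by combining the classification $\bar{X}_A(\pi_A)=\mathbf{SC}\cup\mathbf{ISO}$ for $d_A>2$ with Theorem~\ref{qudit_iso} (isotropic channels commute with $\pi_A$, so Theorem~\ref{thm:mono} and the relative-entropy form give monotonicity) and the trivial observation that semiclassical channels output classical-quantum states of zero diagonal discord. Your explicit remark on the degenerate case is a small extra care the paper sidesteps by assuming nondegeneracy, but it does not change the argument.
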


Fig.~\fig{cp} summarizes the structure of different classes of free local operations in the theory of $\pi$.

\section{Continuity}
As mentioned, Refs.~\cite{criteria,wpm} brought up examples of states with maximally mixed marginals, where diagonal discord can be discontinuous. The discontinuity essentially comes from the maximization within the degenerate subspace: one can perturb the state in the direction that is far away from the optimal eigenbasis. However, in the absence of degeneracies, the eigenbasis is unique, so the above phenomenon cannot occur.
%the general continuity properties of diagonal discord remains unexplored. 
We first formally prove that diagonal discord is indeed continuous when the local density operator being measured is nondegenerate, by deriving a continuity bound \change{in a similar spirit as the celebrated Fannes-type inequalities for the continuity of the von Neumann entropy \cite{Fannes1973,audenaert}}.  The main idea is that $\pi$ changes continuously, which is also known as ``weak continuity'' \cite{criteria}. 

\begin{thm}\label{continuity}
Diagonal discord is continuous at states such that the local density operator being measured is nondegenerate. More explicitly,
let $\rho_{AB}$ be a bipartite state in finite dimensions such that $\rho_A={\rm tr}_B\rho_{AB}$ has distinct eigenvalues, and the smallest gap is $\Delta$. 
Suppose $\rho'_{AB}$ is a perturbed state such that $\norm{\rho'_{AB}-\rho_{AB}}_1\leq\epsilon$. For sufficiently small \change{$\epsilon>0$}, it holds that 
\begin{equation}
 |\bar{D}_A(\rho'_{AB})-\bar{D}_A(\rho_{AB})|\leq \left(\frac{\change{ \sqrt{2\,d_A^3d_B^3}}}{\Delta}+1\right)\epsilon\log (d_A d_B-1)+ H\left[\frac12\left(\frac{\change{2 \sqrt{2\,d_A^3d_B^3}}}{\Delta}+1\right)\epsilon\right]+ H(\epsilon/2).\label{eq:continuity_bound}
\end{equation}
where $H(\epsilon)=-\epsilon \log \epsilon -(1-\epsilon) \log (1-\epsilon) $ is the binary entropy function.
\end{thm}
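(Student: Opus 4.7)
The plan is to split the difference by triangle inequality using $\bar D_A(\rho_{AB}) = S(\pi_A(\rho_{AB})) - S(\rho_{AB})$ (valid when $\rho_A$ is nondegenerate, so $\pi_A$ is unique) and then apply the Audenaert sharpening of Fannes' inequality to each entropy difference. Writing $\pi'_A$ for the pinching in the eigenbasis of the perturbed reduced state $\rho'_A$ (well defined since Weyl's inequality keeps the spectrum of $\rho'_A$ nondegenerate provided $\epsilon<\Delta$),
\begin{equation}
|\bar D_A(\rho'_{AB}) - \bar D_A(\rho_{AB})| \le |S(\pi'_A(\rho'_{AB})) - S(\pi_A(\rho_{AB}))| + |S(\rho'_{AB}) - S(\rho_{AB})|.
\end{equation}

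The second term is handled immediately: $\|\rho'_{AB} - \rho_{AB}\|_1 \le \epsilon$ combined with Fannes-Audenaert yields $(\epsilon/2)\log(d_Ad_B-1) + H(\epsilon/2)$, contributing the $H(\epsilon/2)$ summand and half of the $\log(d_Ad_B-1)$ coefficient in \eq{continuity_bound}. For the first term, the plan is to control $\|\pi'_A(\rho'_{AB}) - \pi_A(\rho_{AB})\|_1$ and apply Fannes-Audenaert again. Splitting once more,
\begin{equation}
\|\pi'_A(\rho'_{AB}) - \pi_A(\rho_{AB})\|_1 \le \|\pi'_A(\rho'_{AB}) - \pi'_A(\rho_{AB})\|_1 + \|\pi'_A(\rho_{AB}) - \pi_A(\rho_{AB})\|_1,
\end{equation}
the first piece is at most $\epsilon$ by contractivity of the trace norm under the CPTP map $\pi'_A$, while the second piece isolates the ``weak continuity'' of $\pi$ --- the effect of changing only the measurement basis on a fixed state.

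To bound this last piece, expand $\Pi_i\rho\Pi_i - \Pi'_i\rho\Pi'_i = (\Pi_i - \Pi'_i)\rho\Pi_i + \Pi'_i\rho(\Pi_i - \Pi'_i)$ and estimate the trace norm of the sum by $\sum_i \|\Pi_i - \Pi'_i\|$ in a convenient Schatten norm. The Davis-Kahan $\sin\Theta$ theorem (equivalently, first-order perturbation theory for nondegenerate eigenprojectors) combined with the monotonicity $\|\rho'_A - \rho_A\|_1 \le \|\rho'_{AB} - \rho_{AB}\|_1 \le \epsilon$ bounds each $\|\Pi_i - \Pi'_i\|$ by a constant multiple of $\epsilon/\Delta$. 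Summing over the $d_A$ projectors and converting between Schatten norms on ${\mathcal H}_A$ and ${\mathcal H}_A \otimes {\mathcal H}_B$, which absorbs the $I_B$ tensor factor, produces the claimed $\sqrt{2 d_A^3 d_B^3}$ prefactor; inserting the resulting trace-norm bound $(\tfrac{2\sqrt{2 d_A^3 d_B^3}}{\Delta}+1)\epsilon$ into Fannes-Audenaert supplies the first and third terms of \eq{continuity_bound}. The main technical obstacle is tracking dimension-dependent constants through the Schatten-norm conversions and the perturbation estimate; this is also what enforces the ``sufficiently small $\epsilon$'' hypothesis, namely $\epsilon<\Delta$ so that $\rho'_A$ remains nondegenerate and small enough that the second-order $\|\Pi_i-\Pi'_i\|^2$ contributions can be absorbed into the leading linear term.
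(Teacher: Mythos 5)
Your proposal is correct and follows essentially the same route as the paper's proof: the same triangle-inequality split of the entropy differences, the same two-term decomposition of $\norm{\pi'_A(\rho'_{AB})-\pi_A(\rho_{AB})}_1$ into a contractivity piece and a basis-change piece, the same Schatten-norm conversions yielding the $2d_B\sqrt{d_Ad_B}\sum_i\norm{\Pi'_i-\Pi_i}_2$ bound, and the same final application of Fannes--Audenaert with matching constants. The only (minor, and arguably cleaner) deviation is invoking Davis--Kahan for the eigenprojector perturbation where the paper works through the Rayleigh--Schr\"odinger series for $|\bracket{i}{i'}|^2$, which is why the paper needs the ``sufficiently small $\epsilon$'' caveat for series convergence that Davis--Kahan would largely avoid.
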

\begin{proof}
In the following, we adopt matrix norms given by vectorization, i.e., for an operator $M$, $\norm{M}_p:=\norm{\mathrm{vec}(M)}_p$. For density matrices, $\norm{\cdot}_p$ is equivalent to the Schatten $p$-norm. In particular, $p=1$ yields the trace norm, and $p=2$ yields the Frobenius norm, also known as Hilbert-Schmidt norm or Schur norm.

Notice that 
\begin{eqnarray}
\left|\bar{D}_A(\rho'_{AB})-\bar{D}_A(\rho_{AB})\right|&=&\left|[S(\pi_A(\rho'_{AB}))-S(\rho'_{AB})]-[S(\pi_A(\rho_{AB}))-S(\rho_{AB})]\right|\\
&\leq&\left|S(\pi_A(\rho'_{AB}))-S(\pi_A(\rho_{AB}))\right|+\left|S(\rho_{AB})-S(\rho'_{AB})\right|,\label{ss}
\end{eqnarray}
where the inequality follows from the triangle inequality.
So, by the continuity of von Neumann entropy, diagonal discord is continuous as long as the discord-destroyed state $\pi_A(\rho_{AB})$ is continuous, that is, $\pi_A(\rho'_{AB})$ and $\pi_A(\rho_{AB})$ remain close. We show that it is so when $\rho_A$ is nondegenerate. (Indeed, discontinuity can occur in the vicinity of degeneracies, since the local eigenbases of perturbed states can be far from one another due to the nonuniqueness of eigenbases within the degenerate subspace, and hence $\pi_A(\rho_{AB})$ cannot be continuous. This is the essence behind the examples of discontinuities given in \cite{wpm,criteria}.)
Given $\rho_{A}=\sum_i p_i\Pi_i$, the spectral decomposition of the perturbed marginal can take the form $\rho'_{A}=\sum_i p'_i\Pi'_i$ with perturbed eigenvalues and eigenvectors, since they change continuously \cite{Kato:101545}.
By triangle inequality,
\begin{equation}
\norm{\pi_A(\rho'_{AB})-\pi_A(\rho_{AB})}_1\leq\norm{\pi_A(\rho'_{AB})-\sum_i(\Pi_i \otimes I)\rho'_{AB}(\Pi_i \otimes I)}_1+\norm{\sum_i(\Pi_i \otimes I)\rho'_{AB}(\Pi_i \otimes I)-\pi_A(\rho_{AB})}_1.
\label{tri}
\end{equation}
Since the trace distance is contractive \cite{nc}, the second term directly satisfies
    $\norm{\sum_i(\Pi_i \otimes I)\rho'_{AB}(\Pi_i \otimes I)-\pi_A(\rho_{AB})}_1\leq\norm{\rho'_{AB}-\rho_{AB}}_1\leq\epsilon$. 
The first term is also well bounded due to the continuity of eigenprojection $\pi_A$ \cite{Kato:101545}.
Now we derive an explicit bound for the first term. We assume that $\epsilon$ is sufficiently small so that any $\rho'_{A}$ still remains nondegenerate. (This is always possible since the spectrum is bounded away from a degenerate one by assumption.)
By triangle inequality,
\begin{eqnarray}
&&\norm{\pi_A(\rho'_{AB})-\sum_i(\Pi_i \otimes I)\rho'_{AB}(\Pi_i \otimes I)}_1\nonumber\\&=&\norm{\sum_i(\Pi'_i \otimes I)\rho'_{AB}(\Pi'_i \otimes I)-\sum_i(\Pi_i \otimes I)\rho'_{AB}(\Pi_i \otimes I)}_1\\
&\leq&\sum_i\left(\norm{(\Pi'_i \otimes I)\rho'_{AB}(\Pi'_i \otimes I)-(\Pi'_i \otimes I)\rho'_{AB}(\Pi_i \otimes I)}_1+\norm{(\Pi'_i \otimes I)\rho'_{AB}(\Pi_i \otimes I)-(\Pi_i \otimes I)\rho'_{AB}(\Pi_i \otimes I)}_1\right).
\end{eqnarray}
Notice that
\begin{eqnarray}
\norm{(\Pi'_i \otimes I)\rho'_{AB}(\Pi'_i \otimes I)-(\Pi'_i \otimes I)\rho'_{AB}(\Pi_i \otimes I)}_1 &=& \norm{(\Pi'_i \otimes I)\rho'_{AB}[(\Pi'_i-\Pi_i) \otimes I]}_1\\
&\leq& \sqrt{d_A d_B}\norm{(\Pi'_i \otimes I)\rho'_{AB}[(\Pi'_i-\Pi_i) \otimes I]}_2\\
&\leq&\sqrt{d_A d_B}\norm{(\Pi'_i \otimes I)}_2\norm{\rho'_{AB}}_2\norm{(\Pi'_i-\Pi_i) \otimes I}_2\\
&\leq&d_B\sqrt{d_A d_B}\norm{\Pi'_i-\Pi_i}_2,
\end{eqnarray}
where the second line follows from $\norm{M}_1\leq\sqrt{\mathrm{rank}M}\norm{M}_2$ \cite{srebro}, the third line follows from submultiplicativity of the Frobenius norm, and the fourth line follows from $\norm{\cdot}_2\leq\norm{\cdot}_1$ \cite{srebro} and $\norm{\rho'_{AB}}_1=1$. Similarly for the second term. So, we obtain
\begin{equation}\label{pi}
\norm{\pi_A(\rho'_{AB})-\sum_i(\Pi_i \otimes I)\rho'_{AB}(\Pi_i \otimes I)}_1\leq 2d_B\sqrt{d_A d_B}\sum_i \norm{\Pi'_i-\Pi_i}_2.
\end{equation}

\change{We next derive an upper bound for $\norm{\Pi_i'-\Pi_i}_2$. 
Let $\xi\tau_{A}:=\rho_A'-\rho_A$ where $\tau_A$ is a traceless Hermitian operator with $\norm{\tau_A}_2=1$ and $\xi\geq 0$ is a scaling constant.
We have
\ba
\xi=\norm{\xi\tau_A}_2=\norm{\rho_A'-\rho_A}_2\leq \norm{\rho_A'-\rho_A}_1\leq \norm{\rho_{AB}'-\rho_{AB}}_1\leq \epsilon,
\label{eq:xi_tau_bound}
\ea
where the first inequality follows from $\norm{\cdot}_2\leq\norm{\cdot}_1$, and the second inequality follows from the contractivity of the trace norm.
Now, notice that $\norm{\Pi'_i-\Pi_i}_2=\sqrt{2(1-|\bracket{i'}{i}|^2)}$.
By nondegenerate perturbation theory \cite{sakurai_napolitano_2017,Kato:101545}, we express $\ket{i'}$ as $\ket{i'}={Z_i}^{\frac12}\ket{\tilde{i'}}$, where $\ket{\tilde{i'}}$ is the unnormalized perturbed state $\ket{\tilde{i'}}=\ket{i}+\xi\ket{i^{(1)}}+\xi^2\ket{i^{(2)}}+\dots$ with $\ket{i^{(k)}}$ being the $k$-th order correction, and ${Z_i}$ is the normalization constant. $Z_i$ has the form
\ba
 Z_i^{-1}=\bracket{\tilde{i'}}{\tilde{i'}}=1+\xi^2\sum_{j\neq i}\frac{|\bra{j}\tau_A\ket{i}|^2}{(\lambda_i-\lambda_j)^2}+\mathcal{O}(\xi^3),
\ea
where $\lambda_i$ denote the eigenvalues of $\rho_A$.
Since $\lambda_i-\lambda_j\geq \Delta>0$ for all $i,j$ with $i\neq j$ by assumption (recall that $\Delta$ is a constant determined by the spectrum of $\rho_A$), this perturbation series converges for sufficiently small $\xi$. 
Since $Z_i^{\frac12}=\bracket{i}{i'}$ because of the structure of the perturbation series \cite{sakurai_napolitano_2017}, for sufficiently small $\xi$, we have 
\ba
 1-|\bracket{i}{i'}|^2=1-Z_i=\xi^2\sum_{j\neq i}\frac{|\bra{j}\tau_A\ket{i}|^2}{(\lambda_i-\lambda_j)^2}+\mathcal{O}(\xi^3)\leq 2\xi^2\sum_{j\neq i}\frac{|\bra{j}\tau_A\ket{i}|^2}{(\lambda_i-\lambda_j)^2},
 \label{eq:series_bound}
\ea
where in the inequality we used that the higher-order terms approach zero more rapidly than the second-order term with $\xi\to 0$, so they can be bounded by $\xi^2\sum_{j\neq i}\frac{|\bra{j}\tau_A\ket{i}|^2}{(\lambda_i-\lambda_j)^2}$ for sufficiently small $\xi$.
Eq.\ \eqref{eq:series_bound} is guaranteed to hold for sufficiently small $\epsilon$ as well, since $\xi$ is bounded from above by $\xi\leq \epsilon$ due to Eq.\ \eqref{eq:xi_tau_bound}. 
Therefore, it holds for sufficiently small $\epsilon$ that
\ba
 \norm{\Pi_i'-\Pi_i}_2=\sqrt{2(1-|\bracket{i'}{i}|^2)}\leq \sqrt{2}\xi\sqrt{2\sum_{j\neq i}\frac{|\bra{j}\tau_A\ket{i}|^2}{(\lambda_i-\lambda_j)^2}}\leq \sqrt{2}\xi\frac{\sqrt{\sum_{i,j}|\bra{j}\tau_A\ket{i}|^2}}{\Delta}\leq \frac{\sqrt{2}\epsilon}{\Delta}.
\ea
where the last inequality is due to Eq.\ \eqref{eq:xi_tau_bound} and $\norm{\tau_A}_2=\sqrt{\sum_{i,j}|\bra{j}\tau_A\ket{i}|^2}=1$.
}

 Plugging this result into Eq.\ (\ref{pi}) and then Eq.\ (\ref{tri}), we get
\begin{equation}
\norm{\pi_A(\rho'_{AB})-\pi_A(\rho_{AB})}_1 \leq \change{\left(\frac{2\sqrt{2\,d_A^3d_B^3}}{\Delta}+1\right)\epsilon
}.
\label{eq:pi_final}
\end{equation}
By the Fannes-Audenaert inequality \cite{Fannes1973,audenaert},
\begin{eqnarray}
|S(\pi_A(\rho'_{AB}))-S(\pi_A(\rho_{AB}))|&\leq& \frac12\left(\frac{\change{2 \sqrt{2\,d_A^3d_B^3}}}{\Delta}+1\right)\epsilon\log (d_A d_B-1)+ H\left[\frac12\left(\frac{\change{2 \sqrt{2\,d_A^3d_B^3}}}{\Delta}+1\right)\epsilon\right],\\
|S(\rho_{AB})-S(\rho'_{AB})|&\leq& \frac{\epsilon}{2}\log (d_A d_B-1)+H(\epsilon/2),
\end{eqnarray}
where $H$ is the binary entropy function. By Eq.\ (\ref{ss}),
\begin{equation}
    |\bar{D}_A(\rho'_{AB})-\bar{D}_A(\rho_{AB})|\leq \left(\frac{\change{ \sqrt{2\,d_A^3d_B^3}}}{\Delta}+1\right)\epsilon\log (d_A d_B-1)+ H\left[\frac12\left(\frac{\change{2 \sqrt{2\,d_A^3d_B^3}}}{\Delta}+1\right)\epsilon\right]+ H(\epsilon/2).
\end{equation}

The source of discontinuity in the presence of degeneracies is essentially the first term of the right hand side of Eq.~(\ref{tri}): $\sum_i(\Pi_i \otimes I)\rho'_{AB}(\Pi_i \otimes I)$ is not necessarily close to $\pi_A(\rho'_{AB})$.  
\end{proof}
\change{
\begin{rem}
 Using Theorem \ref{continuity}, one can find an explicit form of $\epsilon$ to achieve a certain target accuracy $E>0$ for the diagonal discord, thereby obtaining an $\epsilon$-$\delta$ statement of continuity. 
 Since Eq.~(\ref{eq:continuity_bound}) is only applicable to sufficiently small $\epsilon$, we restrict our attention to the regime of sufficiently small $E$ and $\epsilon$, which is sufficient for the sake of demonstrating continuity.   
 Note that $\epsilon < \sqrt{\epsilon}$ and $H(\epsilon)< 2\sqrt{\epsilon}$ for $0<\epsilon<1$.  
 Writing $a=\left(\frac{\change{ \sqrt{2\,d_A^3d_B^3}}}{\Delta}+1\right)\log (d_A d_B-1)$, and $b= \sqrt{\frac{ \sqrt{2\,d_A^3d_B^3}}{\Delta}+\frac12}$, we get
 $|\bar{D}_A(\rho'_{AB})-\bar{D}_A(\rho_{AB})|< (a+2b+\sqrt{2})\sqrt{\epsilon}$.
 Therefore, to achieve $|\bar{D}_A(\rho'_{AB})-\bar{D}_A(\rho_{AB})| < E$, it is sufficient to take $\epsilon < \left(E/(a+2b+\sqrt{2})\right)^2$.  Note that the inequalities in the above $\epsilon$-$\delta$-criterion are strict inequalities.
\end{rem}
}
Locally nondegenerate states such that the local eigenbasis minimizes discord (and deficit), which we call $\pi$-optimal states, represent an important class of states such that the restriction to eigenbasis is indeed optimal. Note that all locally nondegenerate zero discord states are $\pi$-optimal states. The above continuity result indicates some special properties of $\pi$-optimal states.  For example, it directly follows from Theorem \ref{continuity} that diagonal discord remains close to optimized discords in the vicinity of $\pi$-optimal states. Also, continuity of the optimal basis (termed ``strong continuity'' \cite{criteria}) is known to fail for discord and deficit. However, we conjecture that strong continuity holds at $\pi$-optimal states.

\section{Generalizations}
The above results can be generalized to a wide variety of simple discord-type measures defined by $\pi$, such as different distances and multi-sided measures, which can be seen as close variants of diagonal discord. 

\subsection{Diagonal discord given by other distance measures}
First, consider general distance measures besides relative entropy. Let $\delta$ be a nonnegative real function satisfying $\delta(\rho,\sigma)=0$ iff $\rho=\sigma$. Consider
\begin{equation}
    \bar{\mathfrak{D}}(\rho_{AB})_{\delta,\pi_A}:= \delta(\rho_{AB},\pi_A(\rho_{AB}))
\end{equation} 
as a discord measure defined by $\delta$ and the resource destroying map $\pi_A$. 
If $\delta$ satisfies $\delta(\ee(\rho),\ee(\sigma))\leq \delta(\rho,\sigma)$, $\bar{\mathfrak{D}}(\rho_{AB})_{\delta,\pi_A}$ is monotonically nonincreasing under $X_A(\pi_A)$ \cite{rdm}:
\begin{cor}
 If $\delta$ is contractive, $\bar{\mathfrak{D}}(\rho_{AB})_{\delta,\pi_A}$ is monotonically nonincreasing under ${\bf SC \cup ISO}$ on $A$.
\end{cor}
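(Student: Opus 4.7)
The plan is to reduce the corollary to two ingredients already established in the paper: (i) the monotonicity theorem for commuting free operations (Theorem~\ref{thm:mono}) applied to the distance $\delta$, and (ii) the structural results on $\mathbf{ISO}$ and $\mathbf{SC}$ obtained in the qubit/qudit analyses. The key observation is that $\bar{\mathfrak{D}}_{\delta,\pi_A}(\rho_{AB}) = \delta(\rho_{AB}, \pi_A(\rho_{AB}))$ has exactly the form to which Theorem~\ref{thm:mono} applies, with $\pi_A$ as the resource destroying map and $\delta$ as the contractive distance.

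For the $\mathbf{ISO}$ part, I would invoke Theorem~\ref{qudit_iso}, which established that $\mathbf{ISO}\subset X_A(\pi_A)$ for every $d_A\ge 2$, i.e., local isotropic channels commute with $\pi_A$. Combined with Theorem~\ref{thm:mono} (which only requires $\delta$ to be contractive, not specifically the relative entropy), this immediately gives
\begin{equation*}
\delta\bigl(\ee\otimes I_B(\rho_{AB}),\, \pi_A(\ee\otimes I_B(\rho_{AB}))\bigr) \le \delta(\rho_{AB}, \pi_A(\rho_{AB}))
\end{equation*}
for every $\ee\in\mathbf{ISO}$, which is precisely the desired monotonicity on the isotropic side.

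For the $\mathbf{SC}$ part, one cannot appeal to the commuting-condition theorem (since $\mathbf{SC}\cap X_A(\pi_A)=\emptyset$), but the argument is even more direct: by definition, a semiclassical channel $\ee$ outputs states diagonal in a fixed basis, so $\ee\otimes I_B(\rho_{AB})$ is a classical-quantum state and hence a fixed point of $\pi_A$. Then $\pi_A(\ee\otimes I_B(\rho_{AB}))=\ee\otimes I_B(\rho_{AB})$, so $\bar{\mathfrak{D}}_{\delta,\pi_A}(\ee\otimes I_B(\rho_{AB})) = \delta(\sigma,\sigma)=0$ by the faithfulness of $\delta$, and monotonicity holds trivially since the output value is the minimum possible. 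Combining the two cases yields monotonicity under $\mathbf{SC}\cup\mathbf{ISO}$.

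No step here is a real obstacle: the hard work has already been done in proving Theorem~\ref{qudit_iso} and in Theorem~\ref{thm:mono} of \cite{rdm}. The only subtlety to be careful about is that Theorem~\ref{thm:mono} is stated for distances satisfying the data processing inequality, which is exactly the contractivity hypothesis on $\delta$ in the corollary, so the hypotheses match verbatim. The proof is therefore essentially a one-line citation for each of the two subsets $\mathbf{ISO}$ and $\mathbf{SC}$.
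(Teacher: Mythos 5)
Your proposal is correct and follows essentially the same route as the paper: the paper obtains this corollary by noting that Theorem~\ref{thm:mono} holds for any contractive $\delta$ (so monotonicity under $X_A(\pi_A)\supset\mathbf{ISO}$, using Theorem~\ref{qudit_iso}), and by handling $\mathbf{SC}$ separately via the observation that semiclassical channels output classical-quantum states, on which $\bar{\mathfrak{D}}_{\delta,\pi_A}$ vanishes. Both of your ingredients and their combination match the paper's intended argument.
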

%Taking relative entropy gives the diagonal discord, and it is also identical to the measurement-induced nonlocality with entropy measure when $\rho_A$ is not degenerate \cite{xi2012}.
Furthermore, the continuity holds when $\delta$ is given by the Schatten-$p$ norm:
\begin{thm}\label{g_continuity}
 %$\hat{\mathfrak{D}}(\rho_{AB})_{\norm{\cdot}_p,\pi_A}$ is continuous at states such that the local density operator being measured is nondegenerate. More explicitly,
Let $\rho_{AB}$ be a bipartite state in finite dimensions such that $\rho_B={\rm tr}_A\rho_{AB}$ has distinct eigenvalues, and the smallest gap is $\Delta$. 
Suppose $\norm{\rho'_{AB}-\rho_{AB}}_1\leq\epsilon$ where $\epsilon$ is sufficiently small, it holds that 
\begin{equation}
\left|\bar{\mathfrak{D}}(\rho'_{AB})_{\norm{\cdot}_p,\pi_A}-\bar{\mathfrak{D}}(\rho_{AB})_{\norm{\cdot}_p,\pi_A}\right|\leq \change{2\left(1+\frac{\sqrt{2 d_A^3 d_B^3}}{\Delta}\right)}\epsilon.
\end{equation}
\end{thm}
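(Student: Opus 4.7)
The plan is to reduce the claim almost entirely to bounds already established in the proof of Theorem \ref{continuity}. The idea is that Schatten-norm-based diagonal discord is of the form $\norm{\rho_{AB} - \pi_A(\rho_{AB})}_p$, so its variation under a perturbation is controlled by the variations of the two arguments separately, both of which are already handled.

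First I would invoke the reverse triangle inequality for the Schatten $p$-norm (valid for any $p\geq 1$, where $\norm{\cdot}_p$ is genuinely a norm),
\begin{equation*}
\bigl|\,\norm{\rho'_{AB}-\pi_A(\rho'_{AB})}_p - \norm{\rho_{AB}-\pi_A(\rho_{AB})}_p\bigr| \leq \norm{(\rho'_{AB}-\rho_{AB}) - (\pi_A(\rho'_{AB})-\pi_A(\rho_{AB}))}_p,
\end{equation*}
and then apply the ordinary triangle inequality to split the right-hand side into $\norm{\rho'_{AB}-\rho_{AB}}_p + \norm{\pi_A(\rho'_{AB})-\pi_A(\rho_{AB})}_p$. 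Next, I would use the standard monotonicity of Schatten norms in $p$ (i.e., $\norm{M}_p \leq \norm{M}_1$ for $p\geq 1$, which is just monotonicity of $\ell^p$-norms applied to the singular value vector) to upgrade both summands to trace-norm bounds.

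At this point, the first summand is bounded by $\epsilon$ directly from the hypothesis $\norm{\rho'_{AB}-\rho_{AB}}_1\leq\epsilon$, while the second summand is bounded by $\bigl(2\sqrt{2\,d_A^3 d_B^3}/\Delta + 1\bigr)\epsilon$ by quoting Eq.~\eqref{eq:pi_final} from the proof of Theorem \ref{continuity} (whose derivation required only nondegeneracy of the relevant reduced state and a sufficiently small $\epsilon$ for the perturbation-theoretic analysis of eigenprojections to converge). Adding the two contributions yields exactly $2\bigl(1 + \sqrt{2\,d_A^3 d_B^3}/\Delta\bigr)\epsilon$, matching the claimed bound.

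There is essentially no genuine obstacle here: the only subtlety worth flagging is the ``sufficiently small $\epsilon$'' clause, which must be inherited verbatim from Theorem \ref{continuity} to ensure that (i) the perturbed local state $\rho'_A$ remains nondegenerate and (ii) the second-order term in the Rayleigh–Schrödinger expansion dominates the higher-order corrections in the bound on $\norm{\Pi'_i-\Pi_i}_2$. Once those conditions are in place, Eq.~\eqref{eq:pi_final} applies verbatim, and the rest is the three-line norm manipulation above.
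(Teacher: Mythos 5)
Your proposal is correct and follows essentially the same route as the paper's own proof: reverse triangle inequality, triangle inequality, monotonicity of Schatten norms to reduce to the trace norm, and then the hypothesis together with Eq.~\eqref{eq:pi_final} from Theorem \ref{continuity}, with the same caveat about $\epsilon$ being small enough for the eigenprojection perturbation analysis to apply. No gaps.
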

\begin{proof}
By definition,
\begin{eqnarray}
 \left|\bar{\mathfrak{D}}(\rho'_{AB})_{\norm{\cdot}_p,\pi_A}-\bar{\mathfrak{D}}(\rho_{AB})_{\norm{\cdot}_p,\pi_A}\right|&=&\left|\norm{\rho'_{AB}-\pi_A(\rho'_{AB})}_p-\norm{\rho_{AB}-\pi_A(\rho_{AB})}_p\right|\\
 &\leq&\norm{\rho'_{AB}-\rho_{AB}-\left(\pi_A(\rho'_{AB})-\pi_A(\rho_{AB})\right)}_p\\
  &\leq&\norm{\rho'_{AB}-\rho_{AB}}_p+\norm{\pi_A(\rho'_{AB})-\pi_A(\rho_{AB})}_p\\
  &\leq&\norm{\rho'_{AB}-\rho_{AB}}_1+\norm{\pi_A(\rho'_{AB})-\pi_A(\rho_{AB})}_1\\
  &\leq&\change{2\left(1+\frac{\sqrt{2 d_A^3 d_B^3}}{\Delta}\right)}\epsilon,
\end{eqnarray}
where the first and the second inequalities follow from the triangle inequality, the third inequality follows from the monotonicity of Schatten norms $\norm{\cdot}_p\leq \norm{\cdot}_{p'}$ for $p\geq p'$, and the last inequality follows from the perturbation assumption and Eq.~\eq{pi_final}. Note that, as in Theorem \ref{continuity}, $\epsilon$ needs to be sufficiently small so that $\rho'_{A}$ always remains nondegenerate and the perturbation series converges.
\end{proof}
\change{An $\epsilon$-$\delta$ statement can be obtained in a similar manner as in the remark after Theorem \ref{continuity}.}

\subsection{Multi-sided diagonal discord}

In the above, we focused on the one-sided discord measures.
The results can be easily extended to multi-sided measures where we also make a measurement on system $\{A_k\}_{k=1}^n$ in such a way that it will not disturb the marginal state. Here, we assume that $n$ is finite. 
Let $\rho_{\{A_k\}}$ be a composite state over the systems $A_1,\dots, A_n$ and $\rho_{A_j}$ be nondegenerate for all $j=1\dots n$. Denote $\pi_{\{A_k\}}(\rho_{\{A_k\}})=\sum_{i_1\dots i_n}\left(\otimes_{k=1}^n \Pi_{i_k}\right)\rho_{AB}\left(\otimes_{k=1}^n \Pi_{i_k}\right)$ where $\{\ket{i_k}\}$ is the local eigenbasis of system $A_k$. Then we obtain the following.
\begin{cor}
 $\bar{\mathfrak{D}}(\rho_{\{A_k\}})_{\delta,\pi_{\{A_k\}}}$ is monotonically nonincreasing under local operations in ${\bf SC \cup ISO}$.
\end{cor}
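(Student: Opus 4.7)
The plan is to reduce the multi-sided corollary to the single-sided monotonicity already established, via a factorization of the multi-party pinching.  Since every marginal $\rho_{A_j}$ is nondegenerate, $\pi_{\{A_k\}}$ admits the decomposition
\begin{equation}
\pi_{\{A_k\}}=\pi_{A_1}\circ\cdots\circ\pi_{A_n}
\end{equation}
in any order.  The single-party pinchings mutually commute: their eigenprojectors live on disjoint tensor factors, and each $\pi_{A_l}$ preserves the marginals on all other subsystems (because $\sum_i\Pi_i^{A_l}=I_{A_l}$ gives $\mathrm{tr}_{A_l}\circ\pi_{A_l}=\mathrm{tr}_{A_l}$), so the eigenbasis used by $\pi_{A_i}$ is unaffected by the other $\pi_{A_l}$'s.

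For a local isotropic channel $\mathcal{E}_{A_j}$ I would verify commutation with $\pi_{\{A_k\}}$ term by term.  The commutation with $\pi_{A_j}$ itself is supplied by Theorem \ref{qubit_iso} or Theorem \ref{qudit_iso}, while for every $i\neq j$ the maps $\mathcal{E}_{A_j}$ and $\pi_{A_i}$ act on disjoint subsystems and $\mathcal{E}_{A_j}$ preserves the $A_i$-marginal, so commutation is immediate.  Composing through the decomposition places $\mathcal{E}_{A_j}\otimes I$ in $X(\pi_{\{A_k\}})$, and Theorem \ref{thm:mono} yields monotonicity of $\bar{\mathfrak{D}}_{\delta,\pi_{\{A_k\}}}$ for any contractive $\delta$.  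Extending to a tensor product $\bigotimes_j\mathcal{E}_{A_j}$ of local isotropic channels is then immediate by sequential application.

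The semiclassical subcase is more delicate because $\mathbf{SC}$ channels do not commute with $\pi_{A_j}$ in general.  My proposed workaround uses the fact that any $\mathbf{SC}$ output already has its $A_j$-marginal diagonal in the preferred output basis, so $\pi_{A_j}\circ\mathcal{E}_{A_j}=\mathcal{E}_{A_j}$; combined with the disjoint-subsystem commutation this gives
\begin{equation}
\pi_{\{A_k\}}\circ\mathcal{E}_{A_j}=\mathcal{E}_{A_j}\circ\pi_{\{A_k\}\setminus A_j}.
\end{equation}
Contractivity of $\delta$ then bounds $\bar{\mathfrak{D}}(\mathcal{E}_{A_j}(\rho))_{\delta,\pi_{\{A_k\}}}\leq\delta(\rho,\pi_{\{A_k\}\setminus A_j}(\rho))$, and the chain closes provided $\delta(\rho,\pi_{\{A_k\}\setminus A_j}(\rho))\leq\delta(\rho,\pi_{\{A_k\}}(\rho))$, i.e.\ that additional pinching on $A_j$ moves the target further from $\rho$.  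For the relative entropy this follows at once from $S(\rho\|\pi(\rho))=S(\pi(\rho))-S(\rho)$ together with the entropy-increasing property of pinching, and analogous pinching-inequality arguments cover the Schatten $p$-norm cases of the preceding subsection.

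The main obstacle I anticipate is precisely this last step in the semiclassical subcase: the inequality $\delta(\rho,\pi_{\{A_k\}\setminus A_j}(\rho))\leq\delta(\rho,\pi_{\{A_k\}}(\rho))$ does not follow from DPI alone and must be justified per choice of $\delta$.  For the distances used in the preceding corollaries this is straightforward, but it prevents a fully generic contractive-$\delta$ statement without additional hypotheses.  By contrast, the isotropic half of the corollary is clean and purely structural once the commuting decomposition of $\pi_{\{A_k\}}$ is in place.
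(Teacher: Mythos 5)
Your isotropic half is correct and is essentially the route the paper intends: the paper gives no explicit proof of this corollary, asserting only that the one-sided results extend easily, and the implicit argument is exactly your factorization $\pi_{\{A_k\}}=\pi_{A_1}\circ\cdots\circ\pi_{A_n}$ into mutually commuting single-party pinchings (each preserving all marginals, so each factor's eigenbasis is determined by $\rho$ alone), commutation of a local isotropic channel with each factor via Theorems \ref{qubit_iso} and \ref{qudit_iso} plus disjointness of supports, and then Theorem \ref{thm:mono}.

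Where you genuinely depart from the paper is the semiclassical half, and your instinct there is right. The paper's one-sided argument for $\mathbf{SC}$ is that the output of a local semiclassical channel is a classical-quantum state, hence a fixed point of $\pi_A$, hence has measure zero; this does \emph{not} survive the passage to several measured parties, because $\mathcal{E}_{A_j}\in\mathbf{SC}$ only classicalizes $A_j$ and its output need not be a fixed point of $\pi_{\{A_k\}}$. Your intertwining relation $\pi_{\{A_k\}}\circ\mathcal{E}_{A_j}=\mathcal{E}_{A_j}\circ\pi_{\{A_k\}\setminus A_j}$ is the right substitute and is correctly derived. The residual step you flag, namely $\delta(\rho,\pi_{\{A_k\}\setminus A_j}(\rho))\leq\delta(\rho,\pi_{\{A_k\}}(\rho))$, is indeed the only real gap: it holds for relative entropy via $S(\rho\|\pi(\rho))=S(\pi(\rho))-S(\rho)$ and entropy increase under pinching, and for the Hilbert--Schmidt norm by orthogonality of the pinching projections, but your claim that ``analogous pinching-inequality arguments'' cover general Schatten $p$ is asserted rather than proved (the needed bound amounts to a contractivity property of $\mathrm{id}-\pi_{A_j}$ on the relevant operators, which is not automatic), and you correctly concede that a fully generic contractive-$\delta$ statement is not obtained this way. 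In short, your proof is more careful than the paper on the $\mathbf{SC}$ case and closes it for the distances the paper actually uses, at the cost of one distance-specific inequality that must still be checked for each choice of $\delta$.
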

\begin{cor}
 $\bar{\mathfrak{D}}(\rho_{\{A_k\}})_{\delta,\pi_{\{A_k\}}}$, where $\delta$ is Schatten-$p$ norm or relative entropy, is continuous at states such that the local density operators being measured are nondegenerate.
\end{cor}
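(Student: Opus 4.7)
The plan is to reduce the multi-sided statement to the one-sided continuity results already proved in Theorems~\ref{continuity} and~\ref{g_continuity}, by exploiting the factorization $\pi_{\{A_k\}}=\pi_{A_1}\circ\cdots\circ\pi_{A_n}$. This factorization holds because the $\pi_{A_j}$'s act on disjoint subsystems and each preserves the marginals on the untouched factors, so the composition coincides with the joint projection $\sum_{i_1,\dots,i_n}(\otimes_k\Pi_{i_k})(\cdot)(\otimes_k\Pi_{i_k})$ in the statement. The overall strategy then mirrors the one-sided proofs: first derive a multi-sided analog of Eq.~(\ref{eq:pi_final}) bounding $\|\pi_{\{A_k\}}(\rho'_{\{A_k\}})-\pi_{\{A_k\}}(\rho_{\{A_k\}})\|_1$, and then plug it into either Fannes--Audenaert (for $\delta$ the relative entropy) or the Schatten-$p$ triangle-plus-monotonicity argument (for $\delta=\|\cdot\|_p$).

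The main step is to establish the key bound $\|\pi_{\{A_k\}}(\rho'_{\{A_k\}})-\pi_{\{A_k\}}(\rho_{\{A_k\}})\|_1\leq C\,\epsilon$ for some finite constant $C$ depending on the local dimensions and on the minimal spectral gaps $\Delta_j$ of $\rho_{A_j}$. Mimicking the proof of Theorem~\ref{continuity}, I insert the hybrid object $\sum_{\{i_k\}}(\otimes_k\Pi_{i_k})\rho'_{\{A_k\}}(\otimes_k\Pi_{i_k})$ (unperturbed projectors, perturbed state) and apply the triangle inequality. One summand is immediately bounded by $\epsilon$ via contractivity of the trace norm under the projective measurement channel. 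The other summand reduces to controlling $\|(\otimes_k\Pi'_{i_k})\rho'_{\{A_k\}}(\otimes_k\Pi'_{i_k})-(\otimes_k\Pi_{i_k})\rho'_{\{A_k\}}(\otimes_k\Pi_{i_k})\|_1$, which I handle by telescoping $2n$ single-projector swaps, each bounded exactly as in the one-sided proof by $\|\Pi'_{i_j}-\Pi_{i_j}\|_2$ times a dimension-dependent factor.

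For each $j$, the factor $\|\Pi'_{i_j}-\Pi_{i_j}\|_2\leq \sqrt{2}\,\epsilon/\Delta_j$ follows from the nondegenerate perturbation-theory argument of Theorem~\ref{continuity}, whose only input is that $\|\rho'_{A_j}-\rho_{A_j}\|_1\leq\|\rho'_{\{A_k\}}-\rho_{\{A_k\}}\|_1\leq\epsilon$ by contractivity of the partial trace. With the multi-sided $\ell^1$ bound in hand, the two cases of the corollary are then routine. For $\delta$ the relative entropy, the cyclicity/idempotence/completeness manipulations that produced Eq.~(\ref{reldd}) go through identically for the joint projector (each $\Pi_{i_k}\otimes I$ commutes with $\pi_{\{A_k\}}(\rho_{\{A_k\}})$ and the $\Pi_{i_k}$'s on different factors commute), so $\bar{\mathfrak{D}}_{S(\cdot\|\cdot),\pi_{\{A_k\}}}=S(\pi_{\{A_k\}}(\rho_{\{A_k\}}))-S(\rho_{\{A_k\}})$, and two applications of Fannes--Audenaert yield an explicit continuity bound analogous to Eq.~(\ref{eq:continuity_bound}). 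For $\delta$ the Schatten-$p$ norm, the triangle-inequality-plus-norm-monotonicity chain in the proof of Theorem~\ref{g_continuity} applies verbatim with $\pi_A$ replaced by $\pi_{\{A_k\}}$.

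The main obstacle I expect is the bookkeeping in the telescoping swap step: with $n$ subsystems there are $2n$ swaps, and one must verify that at each intermediate stage the surviving product of projectors remains a contraction in the relevant Schatten norm so that each swap contributes only $\|\Pi'_{i_j}-\Pi_{i_j}\|_2$ times a benign dimension factor, rather than amplifying into an uncontrolled product of matrix elements of $\rho'_{\{A_k\}}$. Once this per-swap estimate is pinned down, the constant $C$ scales polynomially in the $d_{A_j}$'s divided by $\min_j\Delta_j$, which is finite under the local nondegeneracy hypothesis and correctly diverges as any $\Delta_j\to 0$, consistent with the known discontinuities at local degeneracies.
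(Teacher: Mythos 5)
Your proposal is correct and follows essentially the same route the paper intends: the corollary is meant to follow by directly extending the one-sided perturbation argument of Theorems \ref{continuity} and \ref{g_continuity}, using the factorization of $\pi_{\{A_k\}}$ into commuting single-site dephasings and the fact that each local marginal perturbation is controlled by $\norm{\rho'-\rho}_1\leq\epsilon$ via contractivity of the partial trace. The telescoping bookkeeping you flag works out as you expect (each intermediate product of projectors on disjoint factors remains a contraction, and the resulting constant scales with the local dimensions over $\min_j\Delta_j$), so your write-up simply supplies details the paper leaves implicit.
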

We note that the known discord-type quantities given by local measurement in the eigenbasis belong to such generalizations when the local density operators being measured are nondegenerate.
$\bar{\mathfrak{D}}(\rho_{AB})_{S,\pi_A\pi_B}$ on a bipartite state (where $S$ denotes relative entropy) gives the measurement-induced disturbance \cite{mid}, and $\bar{\mathfrak{D}}(\rho_{AB})_{\norm{\cdot}_2,\pi_A}$ gives the measurement-induced nonlocality \cite{min} (the similar quantity given by geometric distance measure is investigated in \cite{guo2012}).

\change{\section{Comparison with optimized quantum discord} \label{sec:comparison}
The faithfulness, monotonicity and continuity properties shown above indicate that the diagonal discord is a reasonable measure of quantum correlation, even though it is easily calculable due to the natural, simplified strategy for determining the local measurement, in contrast to the original quantum discord and many variants. 
Here, we intend to gain further insights into the relation between these two quantities by numerically comparing them for an important class of two-qubit states, the symmetric $X$-states.
Recall that the quantum discord introduced by Olliver and Zurek is defined similarly to Eq.~\eqref{discord} while the maximum is instead taken over all the  local von Neumann measurements. 
The symmetric $X$-states we consider are the two-qubit states whose density matrices have the form
\ba
\begin{pmatrix}\label{eq:sx}
 a & 0 & 0 & w \\
 0 & b & z & 0 \\
 0 & z & b & 0 \\
 w & 0 & 0 & d 
\end{pmatrix}
\label{eq:xstates}
\ea
where all the entries are real numbers. 
The states in this class are known to play an important role in non-Markovian dynamics \cite{Fanchini2010}, and they also work as good benchmarks for the comparison because (very approximately correct) analytical formula for the quantum discord is known for this class of states \cite{Fanchini2010, PhysRevA.88.014302} while the states with $X$-state structure can cover the whole spectrum of the discord measure \cite{Girolami2011}.

In Fig.~\ref{fig:dd_discord}, we show the comparison between quantum discord ($D_A(\rho_{AB})$) and diagonal discord ($\bar{D}_A(\rho_{AB})$) for symmetric $X$-states randomly sampled from the geometry given by the generalized Bloch representation.  The justification and technical details of this sampling scheme is given in Appendix \ref{sec:x_sample}.  The point is that the states sampled according to this distribution can be regarded as reasonably random (although there is no naturally distinguished uniform measure for mixed states).
Recall that the diagonal discord is always an upper bound for the quantum discord (which is confirmed in Fig.~\ref{fig:dd_discord}).
We also find that diagonal discord matches the optimized discord exactly for a significant fraction of the sampled states.  That is, the optimal measurement for discord is given by an local eigenbasis for such states.
In our numerical experiment of $10^4$ random samples, we find the fraction of such instances to be approximately 32\% (recall that this fraction is with respect to the distribution induced by the generalized Bloch representation; see Appendix \ref{sec:x_sample}).  This non-vanishing fraction highlights the special role of local eigenbases, as they typically only represent a zero-measure subset of the set of all local measurements.
One might be worried about the large deviation of diagonal discord from quantum discord observed for some instances in Fig.~\ref{fig:dd_discord}.   
However, we stress that, now that diagonal discord is shown to be a valid faithful measure as explained in the above sections, one should regard optimized discord and diagonal discord as the measures corresponding to two different ways of characterizing the quantum correlation, and which measure is preferable just depends on the physical or operational setting one is interested in (for instance, see \cite{heat,dddemon,sone2018quantifying,sone2018nonclassical} for several scenarios in which diagonal discord plays the major role).

Although more thorough investigation would be necessary to draw a definite conclusion on generic states, we expect that a similar behavior would still be observed because of the capability of the $X$-states to cover the broad range of spectrum.  

\begin{figure}[htbp]
    \centering
    \includegraphics[width=0.33\textwidth]{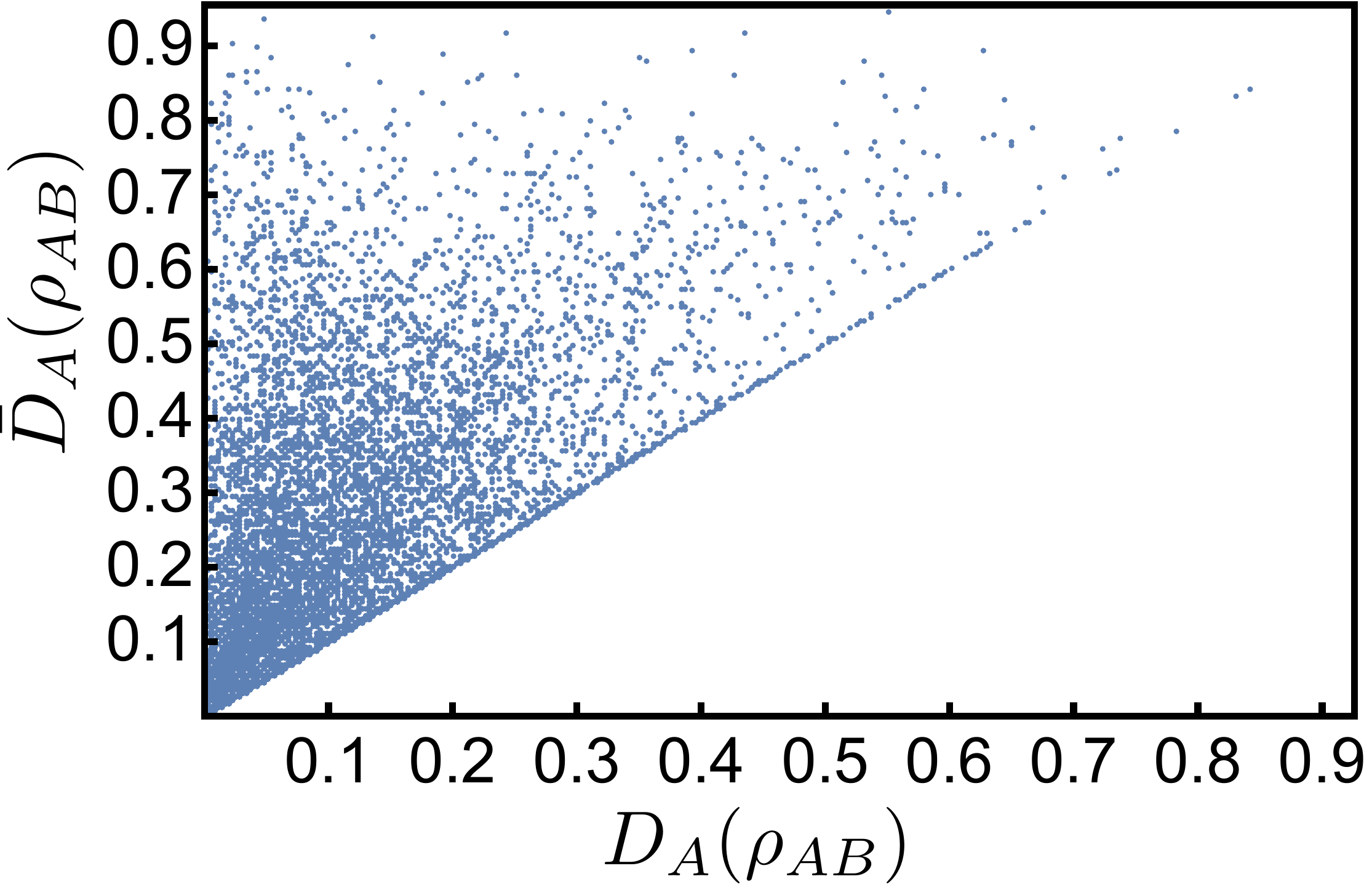}
    \caption{Quantum discord ($D_A(\rho_{AB})$) and diagonal discord ($\bar{D}_A(\rho_{AB})$) computed for the symmetric two-qubit $X$-states with the form of Eq.~\eqref{eq:xstates}, randomly sampled from the uniform distribution induced by the generalized Bloch representation. The number of samples is set to $10^4$.}
    \label{fig:dd_discord}
\end{figure}
}

\section{Concluding remarks}
Diagonal discord is an easily computable and natural measure of 
discord that has potentially wide application.    Here we showed that diagonal discord and a variety of similar measures exhibit desirable mathematical properties of monotonicity
and continuity in the generic case that the measured subsystem is nondegenerate. In particular, our analysis indicates the somewhat surprising
result that diagonal discord is a monotonone under all local discord nongenerating qudit channels, $d>2$, and is
very likely a monotone for discord nongenerating qubit channels as well.   This result represents a nontrivial application of the theory of resource destroying map.
Moreover, the direct
thermodynamic interpretations of diagonal discord \cite{heat,dddemon,sone2018quantifying} suggests
that diagonal discord may play a particularly important role in the resource theory of quantum correlation in general.

\begin{acknowledgements}
We thank Xueyuan Hu, Huangjun Zhu, Davide Girolami, and anonymous referees for helpful comments on the draft.
This work is supported by AFOSR, ARO, IARPA, and NSF under an INSPIRE
grant.
RT also acknowledges the support of the Takenaka Scholarship Foundation.
\end{acknowledgements}

\appendix

\change{
\section{Sampling symmetric $X$-states}
\label{sec:x_sample}
In Section \ref{sec:comparison}, we intend to compare the values of diagonal discord and ordinary (optimized) discord of some generic class of states.   In particular, we consider symmetric two-qubit $X$-states (which take the form of Eq.~(\ref{eq:sx})), since the (very approximately) correct analytical expression of optimized discord is known \cite{Fanchini2010,PhysRevA.88.014302}.  

To observe the generic behaviors, a scheme for randomly sampling symmetric two-qubit $X$-states states is needed. 
In particular, we need a distribution that is uniform in some sense to reasonably estimate the proportion of states such that the optimal basis for optimized discord is given by an eigenbasis, or equivalently, the optimized discord is exactly given by diagonal discord.
For mixed states, there is no unique, naturally distinguished uniform probability measure \cite{zyc_sommers1,bengtsson_zyczkowski_2006}.  Here, we use the following simple method. We express the two-qubit state in terms of the generalized Bloch representation \cite{PhysRevLett.47.838,KIMURA2003339,bengtsson_zyczkowski_2006,AERTS2014975}, and uniformly sample the allowed Bloch vector.  Such methods based on the Bloch representation is expected to give rise to a reasonable and natural notion of uniform distribution of mixed states: for example, it is known that uniform sampling from the qubit Bloch ball corresponds to the Hilbert-Schmidt measure, a standard distribution of mixed states induced by the Hilbert-Schmidt metric or partial tracing over the environment of equal size as the system \cite{bengtsson_zyczkowski_2006}.  For higher dimensions the intuition is similar.

The technical details of our scheme are given below.  
The generalized Bloch representation of a general two-qubit (4-dimensional) takes the following form:
\begin{equation}
    R(\vec{r}) = \frac{1}{4}(I_4 + \sqrt{6}\vec{r}\cdot\vec{\Lambda}),
\end{equation}
where $I_4$ is the identity matrix, $\vec{r} = \{r_i\}_{i= 1,...,15}, r_i\in[-1,1]$ is the generalized Bloch vector, and $\vec{\Lambda} = \{\Lambda_i\}_{i= 1,...,15}$, in analogy to Pauli matrices of SU(2) and Gell-Mann matrices of SU(3), are the 15 Hermitian, traceless generators of SU(4):
\begin{eqnarray*}
\begin{array}{ccc}
 \Lambda_1=\left(\begin{array}{cccc}0&1&0&0\\
1&0&0&0\\
0&0&0&0\\
0&0&0&0\end{array}\right),   & \;\;
 \Lambda_2=\left(\begin{array}{cccc}0&-i&0&0\\
i&0&0&0\\
0&0&0&0\\
0&0&0&0\end{array}\right),   & \;\;
 \Lambda_3=\left(\begin{array}{cccc}1&0&0&0\\
0&-1&0&0\\
0&0&0&0\\
0&0&0&0\end{array}\right),   
\end{array}
\\
\begin{array}{ccc}
 \Lambda_4=\left(\begin{array}{cccc}0&0&1&0\\
0&0&0&0\\
1&0&0&0\\
0&0&0&0\end{array}\right),   &\;\;
 \Lambda_5=\left(\begin{array}{cccc}0&0&-i&0\\
0&0&0&0\\
i&0&0&0\\
0&0&0&0\end{array}\right),   & \;\;
 \Lambda_6=\left(\begin{array}{cccc}0&0&0&0\\
0&0&1&0\\
0&1&0&0\\
0&0&0&0\end{array}\right),   
\end{array}
\\
\begin{array}{ccc}
 \Lambda_7=\left(\begin{array}{cccc}0&0&0&0\\
0&0&-i&0\\
0&i&0&0\\
0&0&0&0\end{array}\right),   &\;\;
 \Lambda_8=\frac{1}{\sqrt{3}}\left(\begin{array}{cccc}1&0&0&0\\
0&1&0&0\\
0&0&-2&0\\
0&0&0&0\end{array}\right),  &\;\;
 \Lambda_9=\left(\begin{array}{cccc}0&0&0&1\\
0&0&0&0\\
0&0&0&0\\
1&0&0&0\end{array}\right),   
\end{array}
\\
\begin{array}{ccc}
 \Lambda_{10}=\left(\begin{array}{cccc}0&0&0&-i\\
0&0&0&0\\
0&0&0&0\\
i&0&0&0\end{array}\right),   &\;\;
 \Lambda_{11}=\left(\begin{array}{cccc}0&0&0&0\\
0&0&0&1\\
0&0&0&0\\
0&1&0&0\end{array}\right),   &\;\;
 \Lambda_{12}=\left(\begin{array}{cccc}0&0&0&0\\
0&0&0&-i\\
0&0&0&0\\
0&i&0&0\end{array}\right),
\end{array}
\\
\begin{array}{ccc}
 \Lambda_{13}=\left(\begin{array}{cccc}0&0&0&0\\
0&0&0&0\\
0&0&0&1\\
0&0&1&0\end{array}\right),   & \;\;
 \Lambda_{14}=\left(\begin{array}{cccc}0&0&0&0\\
0&0&0&0\\
0&0&0&-i\\
0&0&i&0\end{array}\right),   &\;\;
 \Lambda_{15}=\frac{1}{\sqrt{6}}\left(\begin{array}{cccc}1&0&0&0\\
0&1&0&0\\
0&0&1&0\\
0&0&0&-3\end{array}\right).
\end{array}
\end{eqnarray*}
They satisfy the orthogonality relation $\mathrm{tr}(\Lambda_i\Lambda_j) = 2\delta_{ij}$ (and also the standard commutation relations and Jacobi identities).
The point is that they form a standard ``orthonormal'' basis of Hermitian matrices in dimension 4, in analogy to the unit basis vectors in Euclidean space.
%Hilbert-Schmidt measure.
Note that, in contrast to the basic Bloch representation for qubits, there exist matrices inside the unit ball of $\vec{r}$ with negative eigenvalues, i.e.~do not represent valid density operators, in higher dimensions \cite{AERTS2014975}. So we need to add the constraint of positive semidefiniteness to guarantee that the matrix is a density matrix.

The constraints enforced by the form of symmetric $X$-states are the following. 
First, several entries are restricted to be zero, which implies:
\begin{equation}
    r_1 = r_4 = r_{11} = r_{13} = 0.
\end{equation}
Second, the entries are real numbers, which implies:
\begin{equation}
    r_2 = r_5 = r_7 = r_{10} = r_{12} = r_{14} = 0.
\end{equation}
Finally, the $b$ entries imply that
\begin{equation}
    \frac{1}{4}(1 - \sqrt{6}r_3 + \sqrt{2}r_8 + r_{15}) = \frac{1}{4}(1 - 2\sqrt{2}r_8 + r_{15}),
\end{equation}
so
\begin{equation}
    r_3 = \sqrt{3}r_8.
\end{equation}
Therefore, the Bloch representation of symmetric two-qubit $X$-states take the following form, in terms of the four free parameters $r_6, r_8, r_9, r_{15}$:
\begin{equation}
    X(\vec{r}) = \frac{1}{4}\left(I_4 + \sqrt{6}\left(\sqrt{3}r_8\Lambda_3 + r_6\Lambda_6 + r_8\Lambda_8+ r_9\Lambda_9 + r_{15}\Lambda_{15}\right)\right),
\end{equation}
and the matrix form is
\begin{equation}
    X(\vec{r}) =  
\frac{1}{4}\begin{pmatrix}
 1 + 4\sqrt{2}r_8 + r_{15} & 0 & 0 &  \sqrt{6}r_9 \\
 0 & 1 - 2\sqrt{2}r_8 + r_{15} &  \sqrt{6}r_6 & 0 \\
 0 &  \sqrt{6}r_6 & 1 - 2\sqrt{2}r_8 + r_{15} & 0 \\
  \sqrt{6}r_9 & 0 & 0 & 1 - 3 r_{15} 
\end{pmatrix}.
\label{eq:xstates2}
\end{equation}
To sample such states uniformly according to the Bloch geometry, we draw $r_6, r_8, r_9, r_{15}$ from the uniform distribution on $[-1,1]$, and further require that $\norm{\vec{r}}_2 = r_6^2 + 4r_8^2 + r_9^2 + r_{15}^2 \leq 1$ (so that the data point is on or inside the generalized Bloch ball) and that $X(\vec{r})$ is positive semidefinite (so that the data point represents a valid density operator).
}

%\bibliographystyle{apsrev4-1}
%\bibliography{dd}

%merlin.mbs apsrev4-1.bst 2010-07-25 4.21a (PWD, AO, DPC) hacked
%Control: key (0)
%Control: author (72) initials jnrlst
%Control: editor formatted (1) identically to author
%Control: production of article title (-1) disabled
%Control: page (0) single
%Control: year (1) truncated
%Control: production of eprint (0) enabled
%

\end{document}